\documentclass[sigconf,nonacm,balance=false]{acmart}

\usepackage[subtle]{savetrees}
\usepackage{nicefrac}
\usepackage{algorithm}
\usepackage{algorithmic}
\usepackage{subcaption}
\usepackage{multirow}
\usepackage{pgfplots}
\usepackage{pgfplotstable}
\usepackage{enumitem}
\usepackage{float}
\pgfplotsset{compat=1.18}
\usepgfplotslibrary{groupplots}
\usetikzlibrary{calc, matrix}

\newlength{\subplotwidth}
\newlength{\subplotheight}
\pgfplotsset{
  kvwzero/.style={blue, mark=*, solid, thick},
  kvwone/.style={green!60!black, mark=*, solid, thick},
  kvwthree/.style={red, mark=*, solid, thick},
  kvwfive/.style={purple, mark=*, solid, thick},
  prefixjasper/.style={orange, mark=square*, dashed, thick},
  prefixqdrant/.style={cyan!70!black, mark=square*, dashed, thick},
  appaxis/.style={
    width=0.31\textwidth, height=0.15\textheight,
    xlabel={Top-$k$}, xmin=2, xmax=103, xtick={5,10,20,50,100},
    xticklabel style={font=\tiny, rotate=35, anchor=east},
    tick label style={font=\tiny}, label style={font=\scriptsize},
    title style={font=\small}, grid=both, grid style={draw=gray!18},
    tick align=outside,
  },
}
\pgfplotsset{
  isttft/.style ={blue!70!black, mark=*,       solid,          thick, mark size=1.4pt},
  isq2ft/.style ={blue!70!black, mark=square*, densely dashed, thick, mark size=1.4pt},
  memttft/.style={red!75!black,  mark=*,       solid,          thick, mark size=1.4pt},
  memq2ft/.style={red!75!black,  mark=square*, densely dashed, thick, mark size=1.4pt},
  lataxis/.style={
    width=0.30\textwidth, height=3.9cm,
    xmode=log, log basis x=2, xmin=4.5, xmax=60,
    xtick={5,10,20,50}, xticklabels={5,10,20,50}, xlabel={Top-$k$},
    ymode=log, ymin=13, ymax=320,
    ytick={20,40,80,160,320}, yticklabels={20,40,80,160,320},
    tick label style={font=\footnotesize}, label style={font=\footnotesize},
    title style={font=\small}, grid=both, grid style={draw=gray!18},
  },
}
\pgfplotsset{
  memtput/.style={red!75!black,   mark=*,        solid,          thick, mark size=1.4pt},
  gputput/.style={blue!70!black,  mark=square*,  solid,          thick, mark size=1.4pt},
  cputput/.style={teal!65!black,  mark=triangle*,densely dashed, thick, mark size=1.6pt},
  tputaxis/.style={
    width=0.30\textwidth, height=3.9cm,
    xlabel={Concurrent users}, xmin=5, xmax=105, xtick={10,25,50,75,100},
    ymin=0, ymax=145, ytick={0,40,80,120},
    tick label style={font=\footnotesize}, label style={font=\footnotesize},
    title style={font=\small}, grid=both, grid style={draw=gray!18},
  },
}
\usepackage{mathtools}
\usepackage{xspace}
\usepackage[noabbrev]{cleveref} % load after hyperref/amsthm
\usepackage{balance} % balance the two columns on the references page
\crefname{appendix}{Appendix}{Appendices}
\Crefname{appendix}{Appendix}{Appendices}

\theoremstyle{acmplain}
\newtheorem{theorem}{Theorem}
\newtheorem{proposition}[theorem]{Proposition}
\newtheorem{corollary}[theorem]{Corollary}
\theoremstyle{acmdefinition}
\newtheorem{definition}[theorem]{Definition}

\definecolor{picolor}{HTML}{4ECDC4}
\definecolor{kvcolor}{HTML}{FF6B6B}
\definecolor{nmcolor}{HTML}{888888}

\newcommand{\sysname}{\textsc{InferScale}\xspace}
\newcommand{\locomo}{LoCoMo\xspace}

\newcommand{\para}[1]{\smallskip\noindent\textbf{#1.}}

\setcopyright{rightsretained}
\begin{document}

\title{\sysname: GPU-Native KV Injection for Personalized LLM Serving}

\author{Peter Li}
\affiliation{%
  % \institution{Khoury College of Computer Sciences, Northeastern University}
  \institution{Northeastern University}
  \city{Boston}
  \state{Massachusetts}
  \country{USA}
}
\email{li.pet@northeastern.edu}

\author{Prashant Pandey}
\affiliation{%
  % \institution{Khoury College of Computer Sciences, Northeastern University}
  \institution{Northeastern University}
  \city{Boston}
  \state{Massachusetts}
  \country{USA}
}
\email{p.pandey@northeastern.edu}

% Abstract must precede \maketitle in acmart.
\begin{abstract}

Large language models are increasingly deployed with persistent personalized
context, such as accumulated memory profiles or long conversation histories,
that is shared across a user's many requests. Production memory systems (e.g.,
Mem0, MemGPT, and Zep) retrieve a relevant subset of this memory and inject it
into the prompt, forcing the serving engine to repeatedly prefill the same
content. As the retrieval budget grows, time-to-first-token (TTFT) increases
even though the underlying memory is reused across requests.

We present \sysname, a GPU-native LLM memory system that replaces repeated
prompt prefilling with reusable KV state. \sysname precomputes each memory
fact's KV representation, stores it alongside a semantic embedding on the GPU,
retrieves relevant facts at serving time, and injects their KV directly into
vLLM's paged cache.
% We show that, under causal self-attention, the KV
% representation of static memory is independent of the query and can therefore
% be reused exactly. 
To support dynamically assembled memories under rotary
position embeddings, we introduce \emph{Chunked RoPE}, which stores keys before
rotation and applies their serving-time positions during injection. 
However, encoding memory facts independently omits the cross-fact context available
during joint prefilling. We mitigate this with \emph{Context-Window Encoding},
which encodes each memory fact together with a small window of preceding
conversation context while caching only the target fact's KV.

\sysname is implemented through vLLM's KV-connector interface, requiring
neither engine modifications nor model fine-tuning.
Across three open-weight models on \locomo, \sysname keeps TTFT nearly constant
as the retrieval budget increases. On Llama-3.1-8B, TTFT increases by
only 4\% from $k=5$ to $k=50$ (16.6--17.3\,ms), compared with 106\% for Mem0, a
state-of-the-art memory system (33.2--68.3\,ms). At $k=50$, \sysname reduces
TTFT by 72--79\% (3.6--4.8$\times$), achieves 60.3\% accuracy versus 63.3\% for
Mem0 without serving-time recomputation, and delivers 3.7--4.5$\times$ the
throughput under concurrent load. These results demonstrate that reusable KV
state decouples memory-conditioned serving latency from retrieved-context size
while preserving application quality.

\end{abstract}

\begin{CCSXML}
<ccs2012>
  <concept>
    <concept_id>10010147.10010257</concept_id>
    <concept_desc>Computing methodologies~Machine learning</concept_desc>
    <concept_significance>500</concept_significance>
  </concept>
  <concept>
    <concept_id>10010583.10010600.10010628</concept_id>
    <concept_desc>Hardware~Emerging technologies</concept_desc>
    <concept_significance>300</concept_significance>
  </concept>
  <concept>
    <concept_id>10010520.10010553.10010562</concept_id>
    <concept_desc>Computer systems organization~Distributed architectures</concept_desc>
    <concept_significance>300</concept_significance>
  </concept>
</ccs2012>
\end{CCSXML}

\ccsdesc[500]{Computing methodologies~Machine learning}
\ccsdesc[300]{Computer systems organization~Distributed architectures}

% \keywords{LLM serving, KV cache, paged attention, personalized memory,
% retrieval-augmented generation, inference systems}

\maketitle

\section{Introduction}
\label{sec:intro}

Production LLM applications increasingly maintain \emph{persistent
user-specific context}, such as accumulated memories or long conversation
histories, that is reused across many requests.
% Production deployments of large language models increasingly condition each
% request on a substantial \emph{static context} shared across a user's many
% queries: an accumulated memory profile, a retrieved document set, a cached
% system prompt, a library of few-shot exemplars. 
Because prepending a user's entire history to every request incurs prefill cost
quadratic in its length, production memory systems such as
Mem0~\cite{chhikara2025mem0}, Zep~\cite{zep2025}, and MemGPT/Letta~\cite{letta}
manage this context by retrieving a relevant top-$k$ subset of memory
\emph{facts}, short natural-language statements distilled from the user's
history,
% the atomic unit these systems store and retrieve, 
per query and \emph{injecting them as prompt text}.
Retrieval avoids prefilling an entire memory store, but it does not eliminate
the cost of conditioning: every retrieved fact is still re-prefilled on
every request. Consequently, time-to-first-token (TTFT) grows with the amount
of memory retrieved, even when the retrieved memories are identical across many
requests.

% Retrieval keeps the prompt short, but it does not make injection free. The
% retrieved memory is prepended to the prompt and \emph{re-prefilled on every
% request}, so per-request latency grows with the amount of context retrieved:
% the more memory a query needs, the slower it is served. This serving cost is
% what we target---and, unlike the model's accuracy, it is avoidable.

% \todo{add citations for prefix caching}
Existing KV-reuse techniques such as prefix
caching~\cite{kwon2023pagedattention} reduce repeated prefilling for identical
prompt prefixes, but they assume reused tokens occupy fixed positions in the
prompt. Persistent-memory systems violate this assumption: different subsets of
memory are assembled dynamically for each request, and their positions depend
on the retrieval result. Therefore, today's memory systems continue to pay
the full prefill cost of retrieved context despite extensive reuse across
requests.

\para{Inject at the attention layer, not the token layer}
Our key observation is simple: under causal attention, the key--value (KV)
representation of a static memory fact depends only on the fact itself and is
independent of the query. Therefore, a fact can be encoded once and
reused across arbitrarily many future requests instead of being repeatedly
re-prefilled. Concretely, this collapses the per-request attention cost of
conditioning from $O((m{+}q)^2)$ to $O(q(m{+}q))$ for $m$ memory and $q$ query
tokens, eliminating the $O(m^2)$ term a memory system otherwise re-pays on every
one of a user's requests.

This observation suggests a different serving primitive. Rather than injecting
retrieved memory as prompt tokens (\emph{prompt injection}), we inject its KV
representation directly into the attention cache (\emph{KV injection}). We show
that, for a fixed context injected at its intended positions, this is not an
approximation but an exact equivalence (\Cref{thm:equivalence}): KV injection
produces the same hidden states and output distribution at all query positions
as prompt injection, for any positional encoding, attention variant, and
feed-forward architecture.

However, achieving KV injections comes with two challenges.
The \emph{first} challenge is positional encoding. Retrieved facts must often
appear at positions different from where they were originally encoded, making
naïve KV reuse incorrect under rotary positional embeddings (RoPE). We address
this challenge with \emph{chunked RoPE}: storing keys before rotary position
encoding and applying the appropriate rotation at insertion time allows a fact
encoded once to be injected at any prompt position while preserving exactly the
same attention behavior as prompt injection (\Cref{thm:chunked-rope}).

The \emph{second} challenge is preserving the accuracy of prompt injection.
Encoding each fact independently eliminates repeated prefilling, but it also
removes the interactions between neighboring facts that would naturally arise
if they were jointly encoded as part of a prompt. We address this with
\emph{context-window encoding}, which encodes each fact together with a small
window of preceding conversation turns while caching only the target fact's KV
representation. This recovers nearly the accuracy of prompt injection without
requiring serving-time recomputation, cache re-encoding, or model fine-tuning.
In short, the injection mechanism itself, i.e., KV injection with chunked RoPE, is
\emph{exact}; the only approximation \sysname introduces is encoding facts
independently, which context-window encoding largely recovers.

\para{\sysname}
We realize these ideas in \sysname,\footnote{Source code at
\url{https://github.com/saltsystemslab/InferScale}.} a GPU-native
retrieve-and-inject serving engine for persistent memory. \sysname is a
data-systems co-design: it maintains two GPU-resident indices over the same
memory, keyed by a shared identifier, an approximate-nearest-neighbor index
over semantic embeddings for retrieval and a pre-RoPE KV store for injection, and
manages them across an GDDR/host-DRAM hierarchy. Keeping both on-device means
retrieved memory never crosses the PCIe bus and only the $q$ query tokens are
prefilled. At serving time,
it embeds the query, retrieves the relevant facts, applies chunked RoPE
to assign their serving-time positions, and injects the resulting KV directly
into vLLM's~\cite{kwon2023pagedattention} paged KV cache through the
KV-connector interface. \sysname requires no model retraining and no modifications
to the serving engine.
% , and no changes to existing retrieval pipelines.
\Cref{fig:cpu-gpu} shows the high-level design of \sysname and how it differs
from existing LLM memory systems such as Mem0.  

\begin{figure}[t]
  \centering
  \includegraphics[width=\columnwidth]{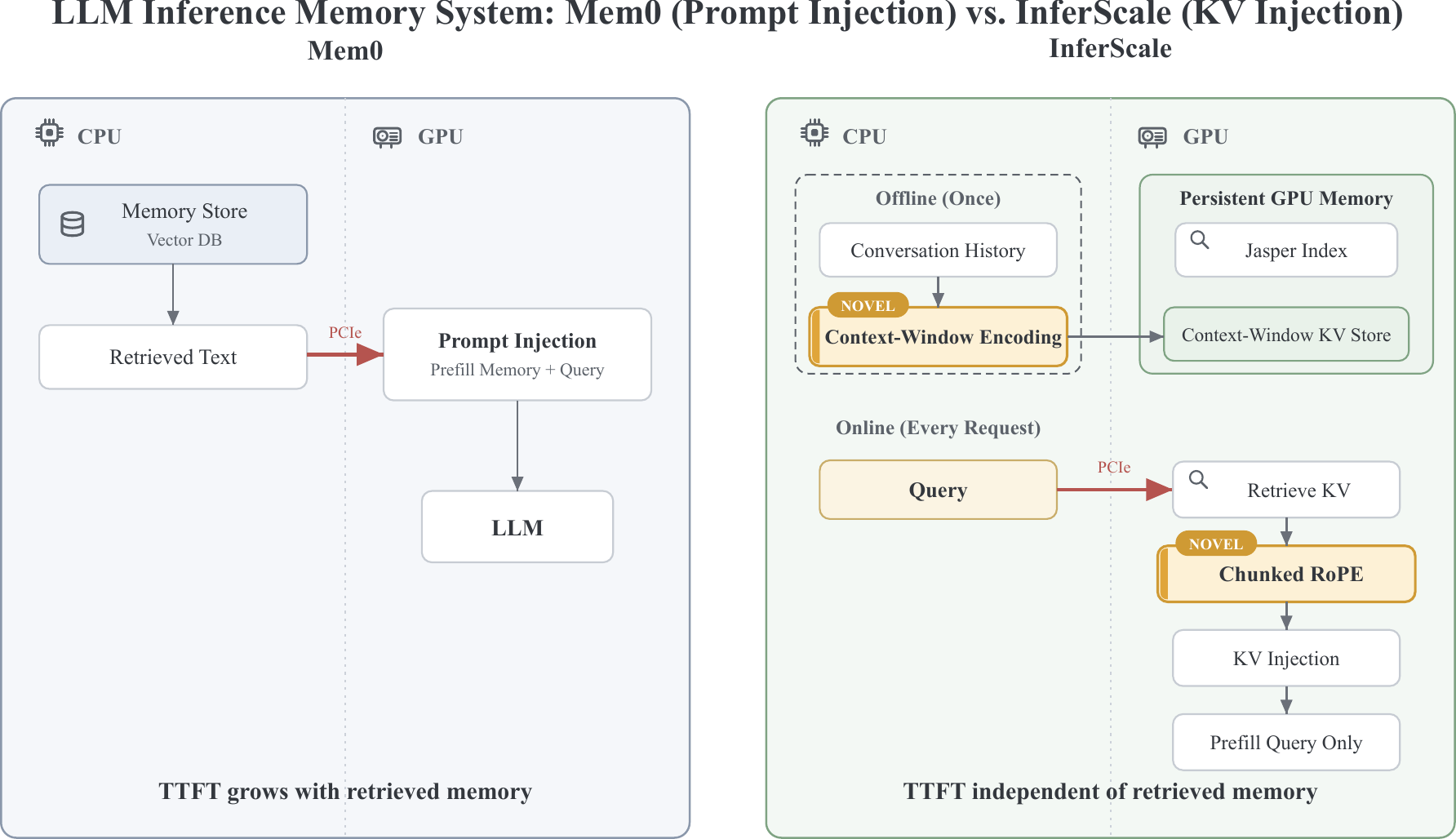}
  \caption{Where memory lives, and what crosses the PCIe bus. \textbf{(a)}~In
  Mem0 (prompt injection) the vector database and memory store sit on the CPU. Each
  request ships the retrieved memory to the GPU as prompt \emph{text}, and the
  GPU re-prefills all $m$ retrieved tokens, so TTFT grows with
  the amount retrieved. \textbf{(b)}~\sysname keeps the vector index (Jasper) and the
  pre-RoPE KV store on the GPU. Only the $q$ query tokens cross to the GPU and
  only they are prefilled, while retrieved memory is injected as KV and never
  leaves the device.}
  \label{fig:cpu-gpu}
\end{figure}

% We realize this as \sysname, a GPU-native
% retrieve-and-inject serving engine. 
% Each memory chunk is encoded twice and
% linked by a shared identifier: a semantic embedding held in a GPU-resident
% vector index (Jasper) for retrieval, and a pre-RoPE KV tensor held on HBM for
% injection. At serving time \sysname embeds the query, retrieves the top-$k$
% chunks, composes their KV with chunked RoPE, and scatter-copies the result into
% the paged cache, all on-GPU, and implemented entirely through
% vLLM's~\cite{kwon2023pagedattention} KV-connector interface without modifying
% the engine.

\para{Contributions}
% \label{sec:intro:contributions}
Our contributions are:

\begin{itemize}[leftmargin=1.2em]

\item \textbf{Retrieve-and-inject inference.}
We show that retrieved memory need not be re-prefilled on every request.
Instead, its key--value (KV) representation can be injected directly into the
attention cache. We prove that KV injection is equivalent to prompt
injection at query positions for any positional encoding scheme, attention
variant, and feed-forward architecture (\Cref{thm:equivalence}).

\item \textbf{Chunked RoPE for position-independent KV reuse.}
Retrieved memory must often be placed at prompt positions different from where
it was encoded. We introduce \emph{Chunked RoPE}, which stores keys before
rotary positional encoding and re-rotates them at insertion time, allowing a
fact encoded once to be injected at arbitrary prompt positions while preserving
the attention produced by prompt injection (\Cref{thm:chunked-rope}).

\item \textbf{Context-window encoding for accurate KV reuse.}
Encoding retrieved facts independently sacrifices the attention between
neighboring facts and reduces model accuracy. We introduce \emph{Context-Window
Encoding}, which encodes each fact together with a small window of preceding
conversation turns while caching only the target fact's KV representation. This recovers nearly the
accuracy of prompt injection without serving-time recomputation, cache
re-encoding, or model fine-tuning.

\item \textbf{InferScale: a GPU-native retrieve-and-inject serving engine.}
We design and implement \sysname{} as a vLLM KV-connector plugin that combines
GPU-resident semantic retrieval with reusable KV storage, requiring no
modifications to the serving engine, no model fine-tuning, and no changes to
existing retrieval pipelines (\Cref{sec:design}).

% \item \textbf{Efficient serving with preserved quality.}
% Across three open-weight models on the LoCoMo benchmark, \sysname{} reduces
% time-to-first-token by $3.6$--$4.8\times$ and increases serving throughput by
% $3.7$--$4.5\times$ over Mem0, while context-window encoding recovers
% nearly its accuracy (\Cref{sec:eval}).

\end{itemize}

\para{Results}
% \label{sec:intro:results}
By replacing repeated prompt prefilling with KV injection, \sysname makes
serving latency nearly invariant to the amount of retrieved memory. On the
\locomo benchmark~\cite{locomo}, as retrieval grows from $k{=}5$ to $k{=}50$ the
vLLM engine TTFT for Llama-3.1-8B rises by only 4\% (16.6 to
17.3\,ms), whereas Mem0 rises by 106\% (33.2 to 68.3\,ms). This
trend is consistent across three open-weight models, giving 72--79\% lower TTFT
at $k{=}50$, a $3.6$--$4.8\times$ speedup.

Encoding facts independently introduces an accuracy trade-off, as
interactions between neighboring retrieved facts are not captured during
encoding. Context-window encoding largely eliminates this trade-off by encoding
each fact together with a small window of preceding conversation turns while
caching only the target fact. It restores monotonic accuracy as more memory is
retrieved, achieving 60.3\% on \locomo at $k{=}50$ compared to Mem0's 63.3\%,
while matching or exceeding Mem0 at smaller retrieval budgets.
Under concurrent load, \sysname's throughput exhibits near-linear scaling with
the number of users, reaching $3.7$--$4.5\times$ that of Mem0 at $100$ users.

\sysname's per-conversation KV store ($1.8$--$4.8$\,GB) can also be offloaded
from GPU memory to host DRAM and streamed over PCIe on demand, lifting the HBM
capacity bound at negligible cost: at $k{=}50$ on Llama-3.1-8B, offloading adds
only ${\sim}3$\,ms to engine TTFT and keeps end-to-end query-to-first-token
(${\approx}100$\,ms) still $2.3\times$ faster than Mem0 ($236$\,ms), while
leaving throughput essentially unchanged.

These results distinguish \sysname from three adjacent lines of work.
Production memory systems (Mem0~\cite{chhikara2025mem0},
MemGPT/Letta~\cite{letta}, Zep~\cite{zep2025}) retrieve relevant context but
still inject it as prompt text, so latency grows with retrieved-context size.
Prefix- and block-level KV-reuse techniques
(Block-Attention~\cite{blockattention}, CacheBlend~\cite{yao2024cacheblend},
LazyAttention~\cite{lazyattention}, LMCache~\cite{lmcache}) reuse cached KV but
only for fixed prompt layouts, and typically require model fine-tuning,
selective re-encoding, or cache recomputation when the context changes. Sparse
and retrieval attention (RetroInfer~\cite{retroinfer2025},
RetrievalAttention~\cite{retrievalattention2024}) instead reduce the
\emph{decode-time} cost of attending over a single long context. In contrast,
\sysname couples GPU-native semantic retrieval with position-independent KV
injection to eliminate the repeated \emph{prefill} of shared context, 
% provably exactly, 
with no fine-tuning, engine changes, or attention recomputation, while
preserving nearly the accuracy of prompt injection.

\section{Background and Motivation}
\label{sec:background}
In this section, we introduce the components of modern LLM serving that
\sysname builds upon. We first review the prefill/decode execution model and KV
cache management, which explain why repeatedly injecting retrieved memory
incurs high serving latency. We then describe retrieval-based memory systems,
whose token-layer interface motivates \sysname's attention-layer injection.
Finally, we review rotary position embeddings and GPU-native vector indexing,
which enable our two key optimizations---Chunked RoPE for position-independent
KV reuse and Context-Window Encoding for accurate memory representations.

% ============================================================

\subsection{LLM inference: prefill and decode}
\label{sec:bg:pipeline}
An autoregressive LLM generates a response one token at a time, each token
attends to the full sequence of tokens before it. Serving a request
proceeds in two distinct phases. In the \emph{prefill} phase, the
model consumes the entire input prompt in a single parallel forward pass: for
each of the $n$ prompt tokens and each layer it computes query, key, and value
(KV) projections, runs self-attention, and produces the first output token.
Because every token attends to all preceding ones, prefill performs
$\Theta(n^2)$ attention work and is compute-bound. Prefill is a major
model-execution component of \emph{time-to-first-token} (TTFT). In the \emph{decode} phase,
the model emits the remaining tokens one at a time. Each new token attends to
all previous tokens and appends its own KV, so per-step work grows with the
running context length and the phase is bound by memory bandwidth rather than
compute.

To avoid recomputing the KV of earlier tokens at every decode step, serving
engines store them in a \emph{KV cache}: prefill populates the cache for the
prompt, and decode reads it and appends one entry per generated token. Caching
makes each decode step cheap, but the prefill phase bears the cost of input
context that scales quadratically when no reusable prefix-cache entry is available.
% it also means the cost of the prompt is paid up front, in full, during
% prefill. The problem central to this paper is that \emph{any context
% prepended to the prompt, such as user memory, is re-prefilled on every
% request}, at a cost that grows with its length, even when that context is
% identical across requests.

\subsection{Paged attention and KV cache management}
Because the KV cache grows with sequence length and differs across concurrent
requests, its memory management is a central concern in LLM serving. Modern
engines such as vLLM~\cite{kwon2023pagedattention} use \emph{paged attention}:
each request's KV cache is divided into fixed-size \emph{blocks} (e.g., 16
tokens) allocated on demand from a shared pool. This avoids the fragmentation
of contiguous allocation and enables \emph{prefix caching}, i.e., reusing
blocks across requests that share a common prompt prefix, and continuous
batching. Prefix caching reuses contiguous prompt prefixes but cannot reuse
dynamically retrieved memory assembled at different positions. \sysname builds
on this substrate: rather than let prefill produce the memory tokens' KV, it
inserts precomputed KV directly into a request's paged blocks
(\Cref{sec:design}), so those blocks are populated from the cache instead of
by computation.

\subsection{Retrieval-based memory and its serving cost} 
\label{sec:bg:memory}
Production memory systems share one architecture.
\textbf{Mem0}~\cite{chhikara2025mem0} extracts facts from conversations, stores
them in a vector database, and retrieves the top-$k$ relevant memories by
embedding similarity. \textbf{Zep}~\cite{zep2025} maintains a temporal
knowledge graph and retrieves relevant subgraphs.
\textbf{Letta/MemGPT}~\cite{letta} pages facts between a main context and
archival storage.
% ; \textbf{ChatGPT} and \textbf{Claude} combine extracted facts, summaries,
% and recent history. 
In every case the retrieved memory is serialized into \emph{prompt text},
concatenated with the query, and processed through full prefill. All these
systems operate at the \emph{token layer}.

Retrieval is both effective and necessary: because prefill cost is quadratic,
prepending a user's entire history to every request is infeasible, so keeping
the prompt short is the point. But token-layer injection carries two serving
costs. First, retrieved memories are re-prefilled on every request, causing
latency to grow with the amount of retrieved context. Second, retrieved text
must be transferred from CPU memory to the GPU before prefilling. \sysname
eliminates both costs by storing the retrieval index and reusable KV
representations on the GPU and injecting memory directly into the KV cache.

% But token-layer injection carries two costs
% (\Cref{fig:cpu-gpu}a). The retrieved memory is \emph{re-prefilled on every
% request}, so per-request latency grows with the number of retrieved tokens;
% and, to fit a prompt budget, these systems further compress raw context into
% lossy extracted facts. \sysname keeps retrieval but moves it and the memory
% store onto the GPU and injects memory at the \emph{KV cache layer}
% (\Cref{fig:cpu-gpu}b): retrieved memory never crosses the PCIe bus and only the
% query is prefilled, which removes the first cost, while injecting raw retrieved
% turns rather than extracted facts changes the character of the second.

\subsection{Rotary position embeddings}
\label{sec:bg:rope}
Most modern decoders encode position with rotary position embeddings
(RoPE)~\cite{su2021roformer}, which rotate the query and key vectors at position
$p$ by a fixed block-diagonal rotation $R_p$. Because $R_a^\top R_b = R_{b-a}$,
the attention score $\langle R_a q, R_b k\rangle = \langle q, R_{b-a} k\rangle$
depends only on the \emph{relative} position $b-a$. Two things to note here:
% consequences matter here: 
the rotation is applied \emph{after} the key projection $W_K h$, so a key
can be stored before rotation and rotated later; and the score a key receives
depends only on where it sits relative to the query, not on its absolute index.
\sysname's chunked RoPE (\Cref{sec:chunked-rope}) rests on both.

\subsection{Contextual encoding of memory}
Retrieved memories are commonly stored as independently retrievable units. Mem0,
for example, extracts discrete \emph{facts} from conversation turns. While this
organization supports efficient semantic retrieval, transformer representations
are contextual, i.e., a token's hidden state depends on preceding tokens
through self-attention. Encoding a fact independently removes conversation
context that can disambiguate its meaning, reducing downstream answer quality
after KV injection.
This tension between retrieval granularity and contextual encoding motivates our
\emph{context-window encoding} technique, which preserves local context while
maintaining independently reusable fact representations.

\subsection{Jasper: A GPU-native vector index}
\label{sec:bg:jasper}
Retrieval over embeddings is an approximate nearest-neighbor (ANN) search
problem, conventionally served from a CPU-resident vector index such as
HNSW~\cite{malkov2020hnsw} or Faiss~\cite{johnson2019faiss}. \sysname instead
uses \emph{Jasper}~\cite{mccoy2026gpu}, a GPU-resident graph-based ANN index (in
the spirit of GPU indices such as CAGRA~\cite{ootomo2023cagra}):
it builds a proximity graph over the fact embedding vectors in GPU memory and
answers top-$k$ queries with a beam search on the device. Keeping both the
vectors and the search on the GPU lets retrieval run on the same device as the
pre-RoPE KV store and the serving engine, so a request never leaves the GPU
between embedding its query and injecting the retrieved KV. We access Jasper as
the vector backend of Mem0; as \Cref{sec:eval} shows, because Mem0 runs Qdrant
in exact-search mode the two indices return near-identical results and answer
quality changes only marginally, so the choice of index affects serving latency
rather than accuracy.
Jasper's device memory footprint is negligible compared to the space required
by model weights and KV cache. We present and discuss the memory footprint of
Jasper index in \Cref{sec:eval}.  

\section{\sysname Design}
\label{sec:design}
% ============================================================

\begin{figure*}[t]
    \centering
    \includegraphics[width=\textwidth]{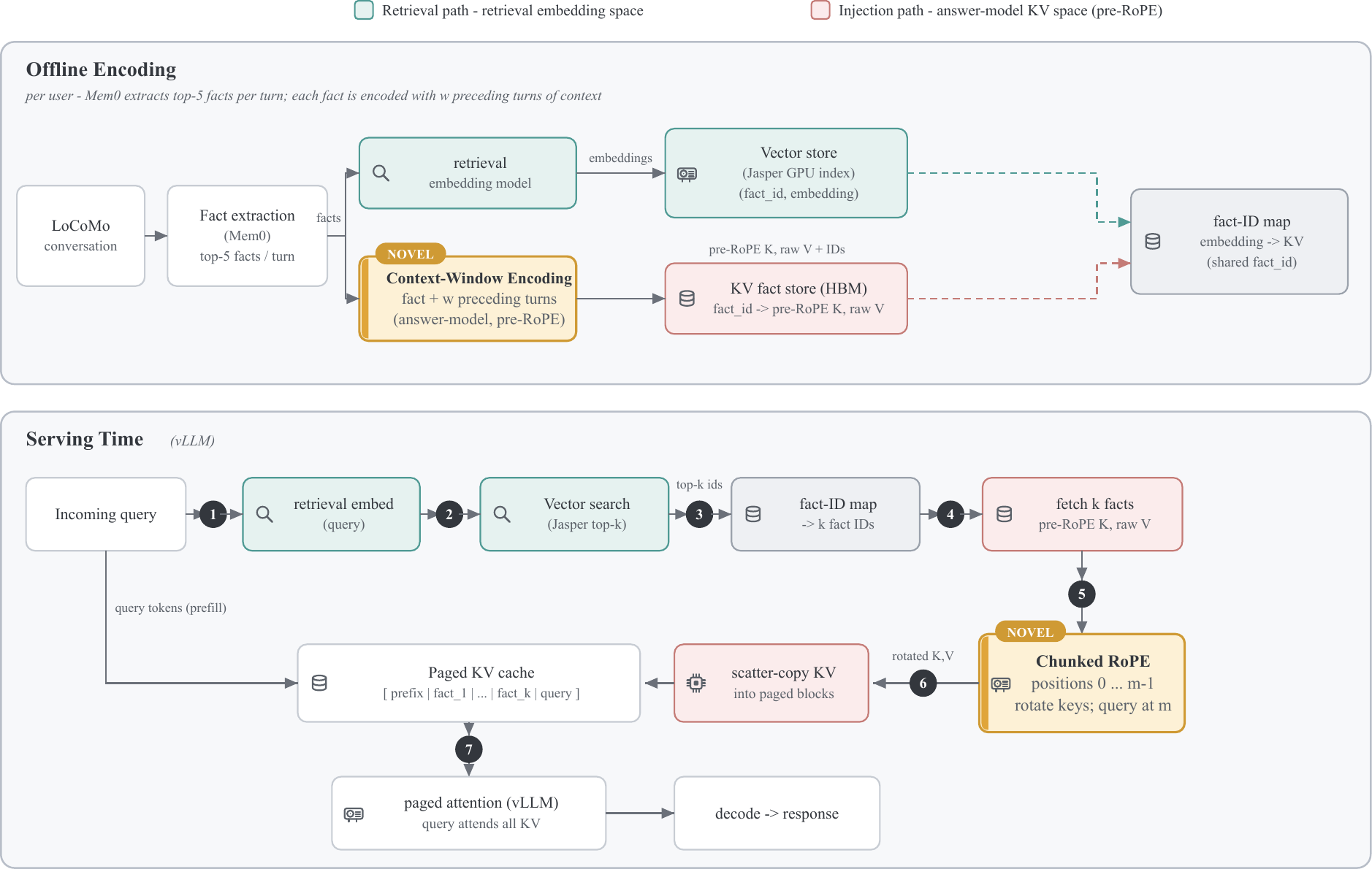}
    \caption{\sysname architecture; the preprocessing pipeline follows
    Mem0~\cite{chhikara2025mem0}. \textbf{Offline}, each extracted fact is
    encoded twice under a shared \texttt{fact\_id}, a retrieval embedding in the
    GPU vector store (retrieval space) and a pre-RoPE key/value tensor from the
    answer model (injection space), with a window of $w$ preceding
    turns. \textbf{At serving time}, the top-$k$ retrieved facts are resolved to
    their KV, positioned by chunked RoPE, and scatter-copied into the paged cache
    ahead of the query. \Cref{sec:design} details each stage.}
    \label{fig:system_arch}
\end{figure*}

\sysname is a GPU-native retrieve-and-inject serving engine implemented as a
vLLM \emph{KV connector}~\cite{kwon2023pagedattention}, allowing it to integrate with stock
vLLM without modifying the serving engine. The design separates memory
processing into two phases. An \emph{offline phase}, executed once when memory
is created, constructs reusable KV representations that are stored on the GPU.
An \emph{online serving phase}, executed for every request, retrieves the
relevant facts, adapts them to their serving-time positions, and injects
them directly into the attention cache before the query is prefilled. This
separation eliminates repeated prefilling while preserving compatibility with
existing retrieval pipelines and serving infrastructure.
\Cref{fig:system_arch} gives an overview of both phases.

The offline phase consists of Context-Window Encoding and construction of a
persistent GPU memory store. The online phase performs semantic retrieval,
Chunked RoPE composition, and KV injection through the vLLM KV-connector
interface. We describe each stage below and defer the correctness proofs to
Section~\ref{sec:theory}.

% \sysname separates memory processing into an offline phase, performed once
% when memory is created, and an online serving phase, executed for every
% request. The offline phase constructs reusable memory representations, while
% the online phase retrieves the relevant memories and injects them directly into
% the attention cache. This separation eliminates repeated prefilling while
% requiring no changes to vLLM or the model itself.
%
% \sysname is a GPU-native retrieve-and-inject serving engine built as a vLLM
% \emph{KV connector}~\cite{kwon2023pagedattention}---the interface vLLM exposes for
% externally managed cache blocks---so it requires no changes to the engine
% itself. It has three parts: (1)~an offline \emph{per-turn chunk encoder} that
% stores keys in pre-RoPE form; (2)~a dual memory store keyed by
% \texttt{turn\_id}---a GPU-resident vector index (Jasper,
% \S\ref{sec:bg:jasper}) for retrieval and a pre-RoPE KV store on HBM for
% injection; and (3)~a serving-time pipeline that retrieves the top-$k$ chunks,
% composes their KV with chunked RoPE, and scatter-copies the result into the
% paged cache. We describe each in turn and defer the correctness arguments to
% \Cref{sec:theory}.

\subsection{Offline context-window encoding}
\label{sec:design:encoder}

\sysname preprocesses a user's memory once before serving, following Mem0's
extraction pipeline~\cite{chhikara2025mem0}: each conversation turn is distilled
into up to five salient \emph{facts}, and each fact becomes a memory chunk
encoded with the same LLM employed during inference. Simply encoding each fact
independently, however, loses the conversational context that helps disambiguate
it and leads to reduced retrieval quality after KV injection.

To address this, \sysname employs \emph{context-window encoding}. Each fact is
encoded together with a configurable window of the $w$ conversation turns
preceding the turn it was extracted from, but only the KV corresponding to the
fact is retained. This preserves local contextual information while allowing
every fact to remain independently retrievable and reusable.

To enable reuse at arbitrary prompt positions, the encoder stores keys before
rotary position encoding. We intercept the output of the key projection
($W_Kh$) before \texttt{apply\_rotary\_pos\_emb} and cache these unrotated keys,
together with the corresponding value tensors. The resulting reusable KV is
stored in GPU memory keyed by \texttt{fact\_id}. Because encoding is performed
offline, the encoder model is unloaded before serving begins and does not
consume GPU memory during inference.

% \sysname splits a user's memory into \emph{chunks} at the granularity of
% conversation turns and encodes each chunk once, offline, with the same model
% used for serving. To make a chunk relocatable to any serving-time position
% (\S\ref{sec:selective-retrieval}), the encoder captures keys \emph{before}
% rotary position embedding: we instrument \texttt{apply\_rotary\_pos\_emb} to
% intercept $k_{\text{pre}} = W_K h$ and store it unrotated, together with the
% (rotation-free) value tensor. Because a turn read in isolation loses the
% conversational context that disambiguates it, each chunk is encoded with a
% window of up to $w$ preceding sessions as a prefix, of which only the target
% turn's $K,V$ slice is retained. The encoded chunks reside on HBM keyed by
% \texttt{turn\_id}, and the encoder model is unloaded before the serving engine
% starts so that no encoder weights compete for GPU memory at serving time.

\subsection{Persistent memory store and retrieval}
\label{sec:design:retrieval}

Each fact is represented in two complementary forms linked by a shared
\texttt{fact\_id}: a \emph{semantic embedding} optimized for approximate
nearest-neighbor search, and a reusable \emph{KV representation} that conditions
the LLM during inference. This separation is deliberate: nearest-neighbor search
over raw attention keys is poorly behaved, whereas a dedicated embedding model
retrieves well, so each subsystem uses the representation it needs while the
\texttt{fact\_id} map keeps them referring to the same memory object.

Each fact's text is embedded (OpenAI \texttt{text-embedding-3-small}) and
indexed for similarity search. We manage this through
Mem0~\cite{chhikara2025mem0}, configured with \emph{Jasper}
(\S\ref{sec:bg:jasper}), a GPU-resident vector index, as its backend; the
embedding is stored under the same \texttt{fact\_id} as the fact's KV. At
serving time \sysname embeds the query with the same embedding model and issues
a top-$k$ search, yielding the \texttt{fact\_id}s of the most relevant facts.
Retrieval and injection thus operate in two different spaces, a contrastively
trained embedding space for search and the model's own KV space for
conditioning, linked only by the shared identifier.

\subsection{Chunked-RoPE composition}
\label{sec:design:compose}

The retrieved KV cannot be injected directly because it was encoded at a
different position than where it will appear in the serving prompt. Since RoPE
encodes token position directly into the keys, naïvely reusing stored KV would
produce incorrect attention scores.

Given the retrieved \texttt{fact\_id}s, \sysname gathers the corresponding
pre-RoPE facts, prepends a fixed instruction prefix, and composes them into a
single memory segment. Composition assigns the concatenated tokens contiguous
virtual positions $0,\ldots,m{-}1$ and applies RoPE to the stored keys at those
positions on the fly, using the model's own rotary tables.
\Cref{thm:chunked-rope} guarantees that relocating a fact's stored KV to a
chosen position yields exactly the attention scores that KV would receive if
prefilled at that position, so the composed KV is a valid drop-in for the paged
cache. The
query is placed immediately after the memory, at position $m$. Because facts
are encoded independently rather than as one joint prefill, the composed
segment omits the cross-chunk attention that a full prefill would include; we
quantify the resulting accuracy cost in \Cref{sec:eval}.

\subsection{KV injection via the vLLM KV connector}
\label{sec:design:connector}

Once composed, the retrieved memory is injected through vLLM's KV connector in
four steps.

\begin{enumerate}[leftmargin=*]
\item The connector registers the composed memory and its corresponding token
sequence.

\item During scheduling, vLLM recognizes that the initial prompt tokens are
already available and allocates paged KV blocks without scheduling them for
prefill.

\item Before execution, the connector copies the composed KV directly into those
blocks using GPU-to-GPU memory copies.

\item The model prefills only the query tokens, after which decoding proceeds
unchanged.
\end{enumerate}

The composed memory, its per-layer KV tensors and the token IDs that produced
them, is registered in a GPU-resident store, and the request prompt is formed
as $[\text{memory token IDs}\mid\text{query token IDs}]$. Our connector
implements vLLM's \texttt{KVConnectorBase\_V1} interface with two sides.
\textbf{Scheduler side:} \texttt{get\_num\_new\_matched\_tokens()} checks
whether the prompt begins with a registered memory token sequence; on a match
it reports those tokens as ``externally available,'' so vLLM allocates paged
blocks for them without scheduling them for prefill. \textbf{Worker side:}
before the forward pass, \texttt{start\_load\_kv()} scatter-copies the composed
KV into the allocated blocks via the slot mapping, a GPU-to-GPU copy on the
same device, after which vLLM prefills only the query tokens and attends to
the injected memory as if it had been prefilled. The connector is loaded as an
external plugin (\texttt{kv\_connector\_module\_path}) and touches no vLLM
internals, which keeps \sysname compatible with stock releases.

\subsection{Storage cost}
\label{sec:design:storage}
Each fact stores $2 \times L \times c \times h \times d \times b$ bytes of KV
(layers $L$, fact length $c$, KV heads $h$, head dimension $d$, byte width
$b$), plus one embedding vector for retrieval. For Llama-3.1-8B-Instruct with
GQA ($L{=}32$, $h{=}8$, $d{=}128$, bfloat16), $1{,}024$ memory tokens occupy
$\approx 128$\,MB on HBM, and the retrieval embeddings add a few kilobytes per
fact. Because facts are encoded once and reused across all of a user's
requests, this cost is amortized over the request stream.

% \para{Generality beyond memory} 
% Although motivated by personalized memory systems, \sysname applies to any
% workload that repeatedly conditions LLMs on largely static retrieved context,
% including retrieval-augmented generation, system prompts, few-shot exemplars,
% and tool specifications. More generally, the design demonstrates that
% conditioning can be performed directly at the attention layer rather than
% through repeated token-level prompt injection.

% Although we frame \sysname around
% personalized memory, the primitive---retrieving static context and injecting it
% as precomputed KV rather than prompt tokens---applies to any setting where a
% stable corpus is queried repeatedly: retrieval-augmented generation,
% system-prompt caching, few-shot exemplars, and tool-schema descriptions.
%
% ============================================================

% ============================================================

\section{Theoretical Analysis}
\label{sec:theory}

We now establish the correctness foundation for KV injection: the conditions
under which it produces \emph{exactly} the same outputs as prompt injection,
and the computational separation it induces. The base result assumes memory is
served at the same positions at which it was encoded. \Cref{sec:chunked-rope}
then lifts it to the position-reassignment case that selective retrieval
requires, and we are explicit below about which parts of \sysname the theorem
covers exactly and which it covers only approximately.

\subsection{Definitions and setup}
\label{sec:theory:setup}

\begin{definition}[Causal Decoder Transformer]
\label{def:transformer}
A causal decoder transformer $\mathcal{T}$ with $L$ layers maps an input token
sequence $(t_1, \ldots, t_n)$ to output hidden states via the recurrence:
\begin{align}
    \mathbf{h}_i^{(0)} &= \mathrm{Embed}(t_i), \quad i = 1, \ldots, n \\
    \mathbf{h}_i^{(\ell)} &= \mathcal{U}^{(\ell)}\!\Big(\mathbf{h}_i^{(\ell-1)},\; \{(\mathbf{k}_j^{(\ell)}, \mathbf{v}_j^{(\ell)})\}_{j \le i}\Big), \quad \ell = 1, \ldots, L
\end{align}
where each key/value pair $(\mathbf{k}_j^{(\ell)}, \mathbf{v}_j^{(\ell)})$ is a
(position-encoded) projection of $\mathbf{h}_j^{(\ell-1)}$, and the per-layer
update $\mathcal{U}^{(\ell)}$ subsumes causal self-attention over the preceding
key/value pairs together with the layer's residual connections, normalization,
and feed-forward transformation. Causal masking enforces that position $i$
depends on position $j$ only for $j \le i$. The analysis uses only two
properties of this recurrence: \emph{(i)} causality, and \emph{(ii)} that
preceding positions enter the update solely through their key/value pairs.
\end{definition}

\begin{definition}[Prompt Injection and KV Injection]
\label{def:methods}
Let $M = (t_1, \ldots, t_m)$ be a \emph{memory} token sequence and $Q =
(t_{m+1}, \ldots, t_{m+q})$ a \emph{query} token sequence.
\begin{itemize}[leftmargin=*]
    \item \textbf{Prompt injection (PI)} runs the full forward pass of
      $\mathcal{T}$ on the concatenation $[M; Q]$, producing hidden states
      $\mathbf{h}_i^{(\ell)}$ for all positions $i \in \{1, \ldots, m+q\}$ and
      layers $\ell \in \{0, \ldots, L\}$.
    \item \textbf{KV injection (KV)} first runs the forward pass of
      $\mathcal{T}$ on $M$ alone, producing KV pairs
      $(\tilde{\mathbf{k}}_j^{(\ell)}, \tilde{\mathbf{v}}_j^{(\ell)})$ for $j
      \in \{1, \ldots, m\}$ at each layer $\ell$. At serving time, the KV pairs
      are injected into the cache at positions $1, \ldots, m$, and the forward
      pass runs on $[M; Q]$ but \emph{skips} the computation of hidden states
      for positions $1, \ldots, m$, using the pre-computed KV pairs instead.
\end{itemize}
Both methods assign position index $j-1$ to token $t_j$ (i.e., positions $0,
\ldots, m-1$ for memory, $m, \ldots, m+q-1$ for query) for the purpose of
positional encoding.
\end{definition}

\subsection{Exact equivalence}
\label{sec:theory:equivalence}

\begin{theorem}[Exact Equivalence for Causal Decoder Transformers]
\label{thm:equivalence}
Let $\mathcal{T}$ be a causal decoder transformer
(Definition~\ref{def:transformer}) with $L$ layers, using any position encoding
scheme that assigns embeddings based solely on position index (including
RoPE~\citep{su2021roformer}, 
% ALiBi, or 
learned absolute embeddings). Let $M$
and $Q$ be memory and query sequences as in Definition~\ref{def:methods}, with
identical effective key/value tensors at each query position and deterministic
inference (i.e., no dropout, stochastic routing, training mode).
% with
% identical position assignments under both PI and KV.

Then for every layer $\ell \in \{1, \ldots, L\}$ and every query position $i
\in \{m+1, \ldots, m+q\}$:
\[
    \mathbf{h}_i^{(\ell)}\big|_{\mathrm{KV}} = \mathbf{h}_i^{(\ell)}\big|_{\mathrm{PI}}
\]
That is, the hidden states at all query positions are \textbf{identical} under
KV injection and prompt injection.
\end{theorem}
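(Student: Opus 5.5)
The plan is a double induction over layers and query positions. The first step is to establish that the memory key/value pairs coincide under both methods. Under prompt injection, causality (property (i) of Definition~\ref{def:transformer}) means $\mathbf{h}_j^{(\ell)}$ for $j \le m$ depends only on $t_1,\ldots,t_j$ and their position indices. The query tokens enter only at indices $j > m$, so they never influence memory positions. The memory hidden states under PI are therefore exactly those produced by the standalone forward pass on $M$ that KV injection uses offline. Both methods assign position index $j-1$ to $t_j$, and the position encoding depends only on that index. Hence the projected, position-encoded pairs agree: $(\tilde{\mathbf{k}}_j^{(\ell)},\tilde{\mathbf{v}}_j^{(\ell)}) = (\mathbf{k}_j^{(\ell)},\mathbf{v}_j^{(\ell)})|_{\mathrm{PI}}$ for all $j\le m$ and all $\ell$. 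I would prove this by its own induction on $\ell$, applying the recurrence to the prefix $[t_1,\ldots,t_j]$. This is the content behind the hypothesis ``identical effective key/value tensors''; I will derive it, but the statement also lets me simply invoke it.

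The second step is an induction on $\ell$ for the query positions, with claim $P(\ell)$: $\mathbf{h}_i^{(\ell)}|_{\mathrm{KV}} = \mathbf{h}_i^{(\ell)}|_{\mathrm{PI}}$ for all $i \in \{m+1,\ldots,m+q\}$. The base case $\ell=0$ holds because both methods compute $\mathrm{Embed}(t_i)$ for the same tokens; absolute learned position embeddings, if present, agree because the indices agree. For the inductive step, fix $i$ and evaluate $\mathcal{U}^{(\ell)}$ on its two kinds of argument. The first argument is $\mathbf{h}_i^{(\ell-1)}$, which agrees by $P(\ell-1)$. The second is the set $\{(\mathbf{k}_j^{(\ell)},\mathbf{v}_j^{(\ell)})\}_{j\le i}$, which I split into two parts. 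For $j \le m$, the pairs agree by the first step. For $m < j \le i$, each pair is the same deterministic position-encoded projection of $\mathbf{h}_j^{(\ell-1)}$ at index $j-1$, and these inputs agree by $P(\ell-1)$. By property (ii), preceding positions influence the update only through these pairs, and inference is deterministic with no dropout or stochastic routing. So $\mathcal{U}^{(\ell)}$ is the same function applied to identical arguments, which gives $P(\ell)$.

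The main obstacle is the first step: one must argue carefully that the offline pass on $M$ alone equals the PI prefix computation. That argument must also account for how $\mathcal{U}^{(\ell)}$ ``subsumes'' normalization and the feed-forward network, so I need it to be position-wise apart from the attention over keys and values. I would make that explicit, rather than leave it implicit, by noting that any cross-position interaction other than through $(\mathbf{k},\mathbf{v})$ is excluded by property (ii). The layer-$L$ equality then immediately gives identical logits and output distributions.
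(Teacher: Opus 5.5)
Your proposal is correct and follows essentially the same argument as the paper: induction on the layer index, with memory positions handled by causality and query positions handled by feeding identical inputs to the deterministic update $\mathcal{U}^{(\ell)}$. For the memory positions, causality means the standalone pass on $M$ reproduces the prompt-injection prefix, so the injected key/value pairs coincide. The only difference is organizational: you prove memory key/value agreement as a separate lemma and then induct over the query positions alone, whereas the paper carries both cases in one inductive hypothesis over all positions. If anything, your split avoids a slight awkwardness in the paper's version, since under KV injection the memory hidden states are never recomputed at serving time.
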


\begin{proof}
We proceed by induction on the layer index $\ell$.

\para{Base case ($\ell = 0$)} The embedding layer is position-wise:
$\mathbf{h}_i^{(0)} = \mathrm{Embed}(t_i)$ depends only on the token $t_i$,
which is the same under both methods. Thus
$\mathbf{h}_i^{(0)}\big|_{\mathrm{KV}} = \mathbf{h}_i^{(0)}\big|_{\mathrm{PI}}$
for all $i$.

\para{Inductive step} Assume that for layer $\ell - 1$, the hidden states agree
at all positions:
\[
    \mathbf{h}_j^{(\ell-1)}\big|_{\mathrm{KV}} = \mathbf{h}_j^{(\ell-1)}\big|_{\mathrm{PI}}, \quad \forall\; j \in \{1, \ldots, m+q\}.
\]

We must show the same holds at layer $\ell$. We consider memory positions ($j
\le m$) and query positions ($j > m$) separately.

\textbf{Memory positions} ($j \le m$). Under PI, $\mathbf{h}_j^{(\ell)}$ is
computed from $\{\mathbf{h}_k^{(\ell-1)}\}_{k \le j}$ via causal attention.
Since $j \le m$, the set $\{k : k \le j\}$ contains only memory positions.
Under KV injection, the forward pass on $M$ alone computes
$\tilde{\mathbf{h}}_j^{(\ell)}$ from $\{\tilde{\mathbf{h}}_k^{(\ell-1)}\}_{k
\le j}$, exactly the same set of positions (causal masking ensures no query
token is visible). By the inductive hypothesis,
$\tilde{\mathbf{h}}_k^{(\ell-1)} = \mathbf{h}_k^{(\ell-1)}\big|_{\mathrm{PI}}$
for all $k \le m$. Since $\mathcal{U}^{(\ell)}$ is a deterministic function of
its inputs, and the position indices are identical:
\[
    \tilde{\mathbf{h}}_j^{(\ell)} = \mathbf{h}_j^{(\ell)}\big|_{\mathrm{PI}}, \quad \forall\; j \le m.
\]
Consequently, the pre-computed KV pairs satisfy
\[
(\tilde{\mathbf{k}}_j^{(\ell)}, \tilde{\mathbf{v}}_j^{(\ell)}) =
(\mathbf{k}_j^{(\ell)}, \mathbf{v}_j^{(\ell)})\big|_{\mathrm{PI}}.
\]

\textbf{Query positions} ($i > m$). Under KV injection, the attention at
position $i$ uses:
\begin{itemize}[leftmargin=*]
    \item The \emph{injected} KV pairs $(\tilde{\mathbf{k}}_j^{(\ell)},
      \tilde{\mathbf{v}}_j^{(\ell)})$ for $j \le m$ (pre-computed from memory).
    \item The \emph{freshly computed} KV pairs $(\mathbf{k}_j^{(\ell)},
      \mathbf{v}_j^{(\ell)})$ for $m < j \le i$ (from the query forward pass).
\end{itemize}
From the memory-position argument above, the injected KV pairs are identical to
what PI would compute. From the inductive hypothesis, the query hidden states
$\mathbf{h}_j^{(\ell-1)}$ for $j > m$ are also identical. Therefore the layer
update at position $i$ receives identical inputs under both methods, and since
$\mathcal{U}^{(\ell)}$ is deterministic:
\[
\begin{aligned}
    \mathbf{h}_i^{(\ell)}\big|_{\mathrm{KV}}
      &= \mathcal{U}^{(\ell)}\!\big(\mathbf{h}_i^{(\ell-1)},\; \{(\mathbf{k}_j^{(\ell)}, \mathbf{v}_j^{(\ell)})\}_{j \le i}\big)\Big|_{\mathrm{KV}} \\
      &= \mathcal{U}^{(\ell)}\!\big(\mathbf{h}_i^{(\ell-1)},\; \{(\mathbf{k}_j^{(\ell)}, \mathbf{v}_j^{(\ell)})\}_{j \le i}\big)\Big|_{\mathrm{PI}} \\
      &= \mathbf{h}_i^{(\ell)}\big|_{\mathrm{PI}}.
\end{aligned}
\]

By induction, the claim holds for all layers $\ell \in \{1, \ldots, L\}$ and
all query positions $i \in \{m+1, \ldots, m+q\}$.
\end{proof}

\para{From the mechanism to the system} Theorem~\ref{thm:equivalence} makes two
assumptions that \sysname's retrieval pipeline does not literally satisfy, and
it is worth stating exactly how each is handled. First, it assumes memory is
injected at the same positions $0,\ldots,m{-}1$ at which it was encoded.
Selective retrieval instead composes chunks and places them at request-time
positions; \Cref{thm:chunked-rope} shows that storing keys \emph{pre-RoPE} and
re-rotating them at injection makes this relocation exact, so position
reassignment, which naively breaks the equivalence, is \emph{recovered}
rather than lost. Second, the theorem encodes the memory $M$ as a single block,
so memory tokens attend to one another. \sysname instead encodes each fact
independently, so a fact's KV omits the cross-fact left context that a joint
prefill would provide. KV injection is therefore exact
for a jointly encoded context and an \emph{approximation} for independently
encoded facts; the residual is the accuracy cost we measure in
\Cref{sec:eval}, and it is the price of encoding each fact once and reusing it
across a user's requests.

\begin{corollary}[Output Distribution Equivalence]
\label{cor:output}
Under the conditions of Theorem~\ref{thm:equivalence}, the next-token
probability distribution $P(t_{m+q+1} \mid t_1, \ldots, t_{m+q})$ is identical
under KV injection and prompt injection.
\end{corollary}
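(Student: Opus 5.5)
The plan is to reduce the corollary directly to Theorem~\ref{thm:equivalence} applied at the last query position $i = m+q$. In a causal decoder the next-token distribution is read off the final-layer hidden state at the last position through a fixed output head. Typically this head is a final normalization followed by the unembedding (LM head) projection and a softmax:
\[
P(t_{m+q+1} \mid t_1,\ldots,t_{m+q}) = \mathrm{softmax}\!\big(W_U\,\mathrm{Norm}(\mathbf{h}_{m+q}^{(L)})\big).
\]
I will first make this explicit as the only route by which the output distribution depends on the input. I will also note that the head acts position-wise and does not look at other positions' states. The final normalization could alternatively be folded into $\mathcal{U}^{(L)}$ via Definition~\ref{def:transformer}.

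Next, since $q \ge 1$, the position $m+q$ lies in $\{m+1,\ldots,m+q\}$, so Theorem~\ref{thm:equivalence} with $\ell = L$ gives $\mathbf{h}_{m+q}^{(L)}|_{\mathrm{KV}} = \mathbf{h}_{m+q}^{(L)}|_{\mathrm{PI}}$. Applying the same deterministic map $\mathrm{softmax}\circ W_U\circ\mathrm{Norm}$ to equal arguments yields equal distributions. Two inputs carry over unchanged from the theorem: the head's parameters are shared by both methods, and inference is deterministic by hypothesis.

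The main obstacle is making sure nothing outside the hidden state enters the output head. For example, some architectures add positional or tied-embedding terms at the output, or apply temperature or logit processors that depend on the prompt. My plan is to handle this by observing two things. The unembedding may use tied weights, but it still depends only on parameters. Any sampling-time processing acts identically on identical logits and identical token sequences $t_1,\ldots,t_{m+q}$, which are the same under both methods by Definition~\ref{def:methods}. As a remark, iterating the argument extends the equality to every decoding step: each generated token's KV is computed from identical hidden states, so the induction of Theorem~\ref{thm:equivalence} applies with the query extended by the generated tokens.
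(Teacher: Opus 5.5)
Your proposal is correct and matches the paper's proof: apply Theorem~\ref{thm:equivalence} at the last query position $i=m+q$ with $\ell=L$, then note that the next-token distribution is a deterministic function of $\mathbf{h}_{m+q}^{(L)}$ (the LM head, including any final normalization and the softmax). Your extra points on $q\ge 1$, tied weights, and logit processors spell out what the paper's one-line argument leaves implicit.
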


\begin{proof}
The output distribution is a deterministic function (the language model head)
of $\mathbf{h}_{m+q}^{(L)}$, which is identical under both methods by
Theorem~\ref{thm:equivalence}.
\end{proof}

\para{Remark} Theorem~\ref{thm:equivalence} provides an \emph{exact}
equivalence, not an approximation. This is a consequence of the causal
structure: memory tokens cannot attend to future query tokens, so their KV
representations are query-independent. The result holds regardless of the
specific attention variant (vanilla, multi-head, grouped-query, multi-query)
and regardless of the positional encoding scheme, provided position assignments
are consistent.

\subsection{Complexity separation}
\label{sec:theory:complexity}

\begin{proposition}[Complexity Separation]
\label{thm:complexity}
For a single attention layer with $n_h$ heads of dimension $d$, memory length
$m$, and query length $q$:
\begin{enumerate}[leftmargin=*]
    \item \textbf{Prompt injection} requires $\Theta\!\big((m+q)^2 \cdot n_h
      d\big)$ attention FLOPs for the prefill pass.
    \item \textbf{KV injection} requires $\Theta\!\big(q \cdot (m+q) \cdot n_h
      d\big)$ attention FLOPs at serving time (excluding the one-time encoding
      cost).
    \item The per-request FLOP savings is:
    \[
        \Delta = \Theta(m^2 \cdot n_h d + m \cdot q \cdot n_h d) = \Theta(m^2 \cdot n_h d) \quad \text{when } m \gg q.
    \]
\end{enumerate}
The total savings across $L$ layers and $R$ requests is $\Theta(R \cdot L \cdot
m^2 \cdot n_h d)$, while the one-time encoding cost is $\Theta(L \cdot m^2
\cdot n_h d)$---amortized to zero as $R \to \infty$. Under compute-bound
prefill, TTFT is proportional to attention FLOPs.
\end{proposition}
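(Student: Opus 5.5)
The plan is to count attention FLOPs directly by decomposing each head's attention into its two matrix products: the score computation $QK^\top$ and the weighted sum $AV$. For a causal layer, the query at position $i$ attends to $i$ keys. Each score is a $d$-dimensional dot product, and each accumulation into the output is a $d$-dimensional scaled add. So position $i$ costs $\Theta(i\,d)$ FLOPs per head, and $\Theta(i\,n_h d)$ over all heads. Softmax adds only $\Theta(i)$ per head, which is dominated. Projections and the feed-forward block are linear in the number of prefilled tokens. They are excluded from this count, consistent with the statement's restriction to attention FLOPs.

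\textbf{Part 1.} Under prompt injection, every position $i=1,\dots,m+q$ is prefilled, giving $\sum_{i=1}^{m+q}\Theta(i\,n_h d)=\Theta((m+q)^2 n_h d)$. The causal mask halves the count but does not change the order. For a lower bound, the last half of the positions each attend to at least $(m+q)/2$ keys.

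\textbf{Part 2.} Under KV injection, the cache for positions $1,\dots,m$ is filled by a copy, so no attention is computed there. By Theorem~\ref{thm:equivalence}, only query positions $i=m+1,\dots,m+q$ run attention, each over $i\le m+q$ keys. Hence the cost is $\sum_{i=m+1}^{m+q}\Theta(i\,n_h d)$. This is upper-bounded by $q(m+q)n_h d$ and lower-bounded by $q\,m\,n_h d$ as well as by $q^2 n_h d/2$. Together these bounds give $\Theta(q(m+q)n_h d)$.

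\textbf{Part 3.} The savings are exactly the memory-position terms, $\Delta=\sum_{i=1}^{m}\Theta(i\,n_h d)$. Rather than subtracting two $\Theta$-bounds, which is illegitimate, I compute $\Delta$ exactly as the difference of the two sums. This gives $\Delta=\Theta(m^2 n_h d)$ unconditionally. The stated form $\Theta(m^2 n_h d + mq\,n_h d)$ can be read as $(m+q)^2-q(m+q)=m(m+q)$, which agrees. The point to be careful about is that this reading bounds $(m+q)^2$ minus $q(m+q)$ as upper-bound expressions, while the sum-level difference is the rigorous quantity. When $m\gg q$, both collapse to $\Theta(m^2 n_h d)$.

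\textbf{Totals and amortization.} I multiply by $L$ layers and $R$ requests to get total savings of $\Theta(RLm^2n_hd)$. The one-time encoding is itself a prefill of the memory, costing $\Theta(Lm^2n_hd)$ by the Part 1 computation with $q=0$. Context-window encoding, if counted, only increases this by a factor depending on $w$. Its ratio to the total savings is $\Theta(1/R)$, which tends to $0$. The final sentence about TTFT I would treat as a modeling assumption rather than prove: in the compute-bound regime, runtime is FLOPs divided by sustained throughput.

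The main obstacle is keeping the accounting honest in Part 3, namely subtracting exact sums rather than $\Theta$-classes. A secondary issue is stating explicitly that the injection copy costs $\Theta(L m\, n_{kv} d)$ memory traffic but zero attention FLOPs, so it does not enter the count.
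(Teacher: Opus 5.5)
The paper states this proposition without proof; only the ``Interpretation under retrieval'' remark follows it. The form in item~3 is what a dense count gives: charge prompt injection for the full $(m+q)\times(m+q)$ score matrix and KV injection for a $q\times(m+q)$ block, then subtract to get $(m+q)^2-q(m+q)=m(m+q)=m^2+mq$.

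Your route is different and more careful. You sum the exact causal per-position cost, which is the same function $c\,i\,n_h d$ of $i$ under both methods, since the query at position $i$ sees $i$ keys either way. The difference is therefore exactly the memory rows, $\Delta=c\,n_h d\,m(m+1)/2$, and no $\Theta$-classes are subtracted. This gives a savings that is $\Theta(m^2 n_h d)$ for every $q$, so the total $\Theta(RLm^2 n_h d)$ needs no proviso; under the dense count it needs $q=O(m)$.

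Parts~1 and~2, the $\Theta(1/R)$ amortization ratio, and the remaining bookkeeping are fine. One small attribution fix: the absence of attention at memory positions comes from Definition~\ref{def:methods}, not Theorem~\ref{thm:equivalence}, which only guarantees that skipping them is harmless.

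What you should correct is the claim that the stated middle form ``agrees.'' It does not. $m(m+1)/2$ and $m(m+q)$ differ by an unbounded factor as $q/m\to\infty$, so under your causal count the first equality of item~3 is false for $q\gg m$; there $m^2+mq$ is only an upper bound. It is a valid $\Theta$ only under the dense cost model or when $q=O(m)$. Say explicitly that you prove the stronger $\Delta=\Theta(m^2 n_h d)$, which yields the operative conclusion for $m\gg q$, rather than reconciling the two.

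You are right to treat the TTFT sentence as an assumption rather than a theorem. It ignores the $\Theta(m)$-per-layer projection and feed-forward FLOPs you set aside. KV injection also eliminates those, and at the memory sizes of the main evaluation they outweigh the quadratic attention term.
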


\para{Interpretation under retrieval} In \sysname the injected memory is the
concatenation of the $k$ retrieved chunks, so $m = \sum_i |C_i|$ grows with the
retrieval budget $k$. Under prompt injection the per-request prefill is
$\Theta((m+q)^2)$ and thus grows with $k$; under KV injection it is
$\Theta(q(m+q))$, grows linearly rather than quadratically with $k$.
% dominated by the fixed query length $q$ when $q \ll m$.
KV injection eliminates the quadratic memory-prefill term. Its remaining online
attention cost grows linearly with retrieved memory length.
% Time-to-first-token is therefore nearly independent of how much memory is
% retrieved, the behavior we observe empirically in~\Cref{sec:eval}.

\section{Chunked RoPE}
\label{sec:chunked-rope}

\begin{figure}[tbp]
  \centering
  \includegraphics[width=\columnwidth]{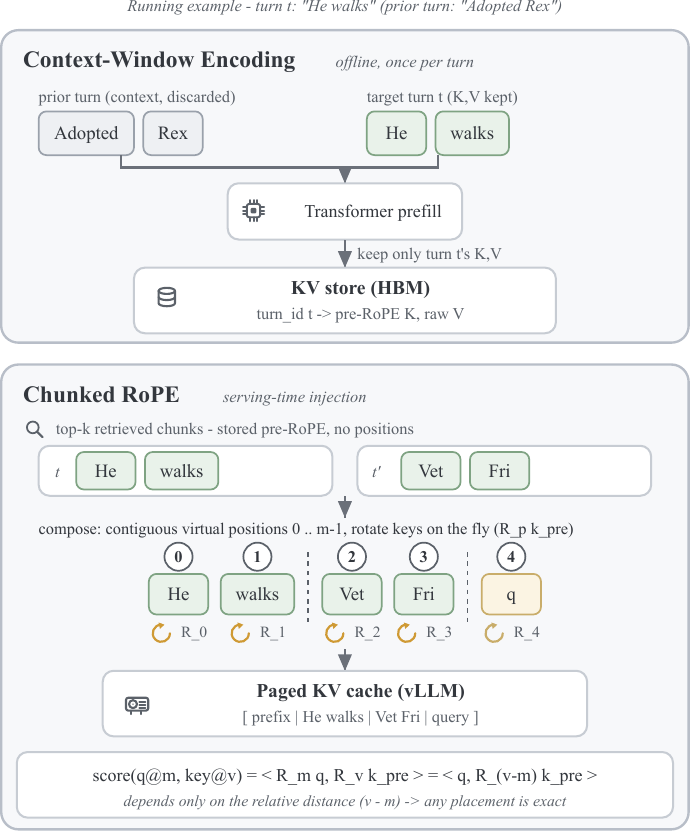}
  \caption{The two techniques on one minimal example (memory turn $t$:
  ``He walks,'' with prior turn ``Adopted Rex'').
  \textbf{Top: context-window encoding (offline).} Turn $t$ is encoded behind a
  window of preceding turns in a single forward pass, but only $t$'s pre-RoPE
  $K$ and raw $V$ are kept: the stored slice can condition `He` on `Rex`,
  without adding the context tokens to each request.
  % already resolves ``He'' to Rex, 
  % at
  % no serving-time cost. 
  \textbf{Bottom: chunked RoPE (serving).} Retrieved
  chunks are stored without positions ($k_{\text{pre}}=W_K h$); at injection
  they receive contiguous virtual positions $0..m{-}1$ and their keys are
  rotated on the fly. Because the score
  $\langle R_m q, R_v k_{\text{pre}}\rangle = \langle q, R_{v-m}
  k_{\text{pre}}\rangle$ depends only on the relative distance $v-m$, any
  placement is exact.}
  \label{fig:rope-encoding}
\end{figure}

The equivalence theorem assumes every memory chunk is encoded at the same
position where it will later be served. However, selective retrieval violates
this assumption because a chunk may appear at different prompt locations for
different requests. In this section, we present \emph{chunked RoPE} wherein we
store the keys before applying RoPE and remove this dependency. 
Relocating stored keys is exact for fixed hidden states, but independently
encoding facts omits left-context interactions and can reduce answer quality.
We address this separate source of approximation with \emph{context-window encoding}.
% Although, this
% position independence results in a nominal accuracy loss we later show that we
% can recover it via \emph{context-window encoding}.

% Selective retrieval is a prerequisite for scaling \sysname beyond a fixed
% working set: at request time the system must pick the top-$k$ relevant chunks
% from a memory store of arbitrary size and inject them into paged attention.
% This raises two questions that are easy to conflate. First, must the position
% assigned to a chunk be known at \emph{encoding} time? Second, given an
% injection position, how much quality is lost relative to a joint prefill of the
% same chunks? \Cref{thm:equivalence} answers neither: it assumes memory is
% encoded at exactly the positions at which it is served. Chunked RoPE removes
% that assumption, turning position reassignment from a case that breaks the
% equivalence into one it handles exactly.

\subsection{Position-agnostic storage}
\label{sec:selective-retrieval}
\sysname stores keys in \emph{pre-RoPE} form, intercepting $k_{\text{pre}} =
W_K h$ before \texttt{apply\_rotary\_pos\_emb}. The following theorem shows that
the virtual position assigned to a chunk at injection is then independent of
any choice made at encoding. A chunk stored once may be injected by different
requests at different virtual positions $v$, with attention scores given
exactly by $\langle R_v^\top q,\, k_{\text{pre}}\rangle$ for the pre-rotated
query (\Cref{fig:rope-encoding}, bottom).

\begin{theorem}[Chunked RoPE Injection]
\label{thm:chunked-rope}
Let $\mathcal{M}$ be a transformer with rotary position embedding (RoPE),
trained on sequences with relative positions in $[-D_{\max}, D_{\max}]$.
Let $\mathcal{C} = \{(k_{p_0+j}, v_{p_0+j}) : 0 \le j < c\}$ denote a
contiguous span of $c$ tokens from the source context, stored in
\emph{pre-RoPE} form (i.e., $k_p \coloneqq W_K h_p$ before any positional
rotation). Suppose $\mathcal{C}$ is virtually injected at offsets
$[v_0, v_0+c)$ and queried by $q_N$ at virtual position $N$. Then:

\begin{enumerate}[leftmargin=*]
  \item \textbf{Shift equivalence.} The attention score between $q_N$ and
    the injected key at virtual position $v_0 + j$ equals the score that
    $\mathcal{M}$ would produce for the same key at its original position
    $p_0 + j$ when queried by $q_{N'}$ at
    \[
      N' \;\coloneqq\; N - (v_0 - p_0).
    \]
    That is, $\langle R_N q,\, R_{v_0+j}\, k_{p_0+j}\rangle
        = \langle R_{N'} q,\, R_{p_0+j}\, k_{p_0+j}\rangle$ for all $j$.
  \item \textbf{Within-chunk preservation.} The attention pattern over
    $\mathcal{C}$ at virtual offset $v_0$ is structurally identical to a
    contiguous span seen at training time, provided the per-token relative
    distances $\{N - (v_0+j) : 0\le j<c\}$ lie within $[-D_{\max}, D_{\max}]$.
    No intra-chunk relative distance is altered.
  \item \textbf{Cross-chunk independence.} For $K$ injected chunks
    $\mathcal{C}_1,\dots,\mathcal{C}_K$ at offsets $v_1,\dots,v_K$, the
    cross-attention scores depend only on the per-token relative distances
    $\{N - v : v \in \bigcup_i \mathcal{C}_i\}$. Inter-chunk virtual gaps
    $v_i - v_j$ enter only through the softmax normalization, never through
    a key--key RoPE interaction. Chunked RoPE introduces no additional
    interaction beyond standard attention normalization.
\end{enumerate}
\end{theorem}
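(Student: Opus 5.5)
The whole argument rests on the algebraic identity recalled in \Cref{sec:bg:rope}. Each $R_p$ is a block-diagonal rotation with angles $p\theta_i$, so $R_p^\top = R_{-p}$ and $R_a^\top R_b = R_{b-a}$. Hence
\[
\langle R_a q, R_b k\rangle = q^\top R_a^\top R_b k = \langle q, R_{b-a} k\rangle .
\]
The plan is to reduce each of the three claims to this identity. We use that the stored keys are exactly $k_{p_0+j} = W_K h_{p_0+j}$, with no rotation baked in. Rotation is applied only at injection, through the model's own tables, so the injected key at virtual position $v_0+j$ is literally $R_{v_0+j}k_{p_0+j}$.

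\textbf{Shift equivalence.} Compute both sides with the identity. The left side is $\langle q, R_{v_0+j-N} k_{p_0+j}\rangle$. With $N' = N-(v_0-p_0)$, the right side is $\langle q, R_{p_0+j-N'} k_{p_0+j}\rangle$. Since $p_0+j-N' = v_0+j-N$ for every $j$, the two are equal. Values carry no rotation, so they are unchanged, and the per-key logits and value vectors agree term by term. This is a one-line verification once the group property is written out.

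\textbf{Within-chunk preservation.} By the shift identity, each logit depends on the pair $(q, k_{p_0+j})$ only through $N-(v_0+j)$. Relocation adds the constant $v_0-p_0$ to every key position and to the matched query position. Therefore all differences $(v_0+j)-(v_0+j')=j-j'$ within the chunk are preserved, and the set of query--key distances equals a set of distances realized by a contiguous span queried at $N'$. The hypothesis on $D_{\max}$ is used only to say that these distances lie in the trained range. This is an interpretive condition, not a step in the algebra, so I would state it explicitly as the meaning of ``structurally identical'' rather than try to prove anything about model behavior.

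\textbf{Cross-chunk independence.} Write the attention output at $N$ as $\sum_{v} \mathrm{softmax}_v(s_v)\, v_v$, where $s_v = \langle q, R_{v-N}k_v\rangle/\sqrt d$ by the identity. Each $s_v$ involves only the single key at $v$ and the query, so no product of two rotated keys appears. A gap $v_i-v_j$ can therefore affect the output only through the shared normalizer $\sum_v e^{s_v}$, which is exactly the standard attention denominator. I expect this to be the main obstacle, because ``introduces no additional interaction'' must be stated precisely. My plan is to define it as follows: the output is a fixed function of the multiset $\{(N-v, k_v, v_v)\}$, so it depends on chunk offsets only through the per-token distances.

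One caveat should be flagged. This argument holds for fixed stored $h$ at a single layer. Any cross-chunk effects in deeper layers arise from the independent encoding of $h$, not from the re-rotation, consistent with the discussion after \Cref{thm:equivalence}.
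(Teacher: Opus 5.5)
Your proposal is correct and follows essentially the same route as the paper's proof sketch. Both reduce all three parts to $\langle R_a q, R_b k\rangle = \langle q, R_{b-a}k\rangle$: shift equivalence via $(v_0+j)-N = (p_0+j)-N'$, within-chunk preservation via the unit-step progression of query--key distances, and cross-chunk independence because each logit involves a single rotated key, so chunks couple only through the softmax denominator. Your multiset formalization of part (3) and your treatment of the $D_{\max}$ condition as interpretive are slightly more precise than the paper's sketch, but the argument is the same.
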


\begin{proof}[Proof sketch]
RoPE acts as a block-diagonal rotation $R_p$ with frequencies
$\{\theta_i\}_{i=1}^{d/2}$, and satisfies $R_a^\top R_b = R_{b-a}$.
Hence $\langle R_a q,\, R_b k\rangle = \langle q,\, R_{b-a} k\rangle$,
i.e., scores depend on $b-a$ alone. For (1), the chunked score is
$\langle q,\, R_{(v_0+j)-N}\, k\rangle$; substituting $N' = N - (v_0-p_0)$
gives $(v_0+j)-N = (p_0+j)-N'$, recovering the full-context score.
(2) follows because $\{N-(v_0+j)\}_{j=0}^{c-1}$ is an arithmetic
progression with unit step, identical in form to any training-time chunk.
(3) follows because cross-attention from $q_N$ to memory keys involves
no key--key RoPE interaction; the softmax denominator
$\sum_v \exp(\cdot/\sqrt{d})$ couples chunks only through individual
$\{N-v\}$ terms.
\end{proof}

\begin{corollary}[In-distribution injection budget]
\label{cor:budget}
For any retrieval scheme, chunked injection remains in-distribution iff
every injected token's virtual position $v$ satisfies $|N - v| \le D_{\max}$.
Equivalently, the total budget of virtual positions consumable by retrieved
chunks is at most $\min(N, D_{\max})$, occupying positions to the left of the
query.
\end{corollary}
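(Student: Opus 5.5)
The corollary follows from part (2) of \Cref{thm:chunked-rope}, applied chunk by chunk. I would prove the two directions of the ``iff'' separately and then derive the budget reformulation.

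\textbf{Reducing to per-token distances.} By the shift-equivalence identity in part (1), every score between $q_N$ and an injected key depends on the key only through the single relative distance $N-v$, where $v$ is that key's virtual position. Part (3) shows that inter-chunk gaps enter only through the softmax normalization, so the whole attention computation at the query sees nothing position-dependent beyond the multiset $\{N-v : v \in \bigcup_i \mathcal{C}_i\}$.

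\textbf{The two directions.}
\begin{itemize}
\item \emph{Sufficiency.} Assume $|N-v|\le D_{\max}$ for every injected $v$. Then each chunk satisfies the hypothesis of part (2), since its arithmetic progression of distances lies in $[-D_{\max},D_{\max}]$. Each chunk's pattern is therefore one the model saw in training, and by part (3) no cross-chunk term introduces a new positional quantity. Hence the injection is in-distribution.
\item \emph{Necessity.} Suppose some token has $|N-v|>D_{\max}$. Its score is $\langle q, R_{v-N} k\rangle$, a rotation angle never realized at training time. The injection therefore leaves the training range of relative positions, so it is not in-distribution.
\end{itemize}
I would state explicitly that ``in-distribution'' means exactly ``every realized relative distance lies in the trained range.'' Under that definition necessity is immediate. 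This definitional step is the main obstacle, since the theorem only gives sufficiency in part (2). With the definition made precise, the converse becomes a tautology rather than a claim about model behavior.

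\textbf{The budget reformulation.} Under causality (Definition~\ref{def:transformer}), injected memory must precede the query, so virtual positions are nonnegative integers $v<N$, that is, $v\in\{0,\dots,N-1\}$. The in-distribution condition $N-v\le D_{\max}$ restricts this to $v\in\{\max(0,N-D_{\max}),\dots,N-1\}$. That set has $\min(N,D_{\max})$ elements. Distinct injected tokens occupy distinct virtual positions, so the total number of injected tokens is at most $\min(N,D_{\max})$. The bound is attained by packing the chunks contiguously immediately left of the query, as in \Cref{sec:design:compose}.
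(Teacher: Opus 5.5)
Your argument is correct and takes the route the paper intends: the paper states this corollary without a separate proof, as an immediate consequence of part (2) of \Cref{thm:chunked-rope} together with left-of-query placement and a count of the admissible positions $\{\max(0,N-D_{\max}),\dots,N-1\}$. You are also right that the ``only if'' direction holds only once ``in-distribution'' is defined to mean that every realized relative distance lies in the trained range. One small correction: nonnegativity of $v$ and distinctness of token positions come from the paper's position convention (positions $0,\dots,m-1$ under the contiguous position policy), not from causality, and the ``equivalently'' clause holds in one direction only, since a count $\le\min(N,D_{\max})$ forces $|N-v|\le D_{\max}$ only when the chunks are packed against the query.
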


A fact therefore carries no positional metadata in storage: only its content
embedding (for retrieval) and its raw $K,V$ tensors are persisted, and RoPE is
applied at injection for whatever positions the request assigns. This is
exactly what lets~\Cref{sec:design:compose} compose independently encoded
facts at request time.

\subsection{Context-window encoding}
\label{sec:context-window-encoding}

% Position-agnostic storage settles \emph{where} a chunk may be injected; it says
% nothing about \emph{what} the chunk stores. 

A fact read in isolation is often underspecified, e.g., a pronoun,
an ellipsis, or a deictic reference (``he,'' ``there,'' ``the same one'')
resolves only against earlier turns. Encoding a fact without cross-turn
attention would bake that ambiguity into its $K,V$, and no choice of
injection position at serving time can recover a referent the keys never saw.

\sysname therefore encodes each fact \emph{in context}. To capture a fact
extracted from turn $t$, the encoder prepends a window of up to $w$ turns
preceding $t$ and runs a single forward pass over the concatenation, but
retains only the $K,V$ slice belonging to the fact itself
(\Cref{fig:rope-encoding}, top), the prefix is discarded. Because keys are
intercepted before rotation ($k_{\text{pre}} = W_K h$,
\Cref{sec:selective-retrieval}), the retained slice is still position-agnostic
and composes under chunked RoPE exactly as~\Cref{thm:chunked-rope} requires.
The prefix changes only the \emph{content} of the fact's hidden states, which
now encode the disambiguating context, not the positions at which the fact can
later be served.

The stored slice is thus simultaneously (1)~\emph{context-aware}, i.e., the
fact's keys already reflect the referents established by its prefix (in
\Cref{fig:rope-encoding}, ``He'' resolves to Rex), and (2)~\emph{relocatable}
to any serving-time position. The cost of context-window encoding is paid once,
offline, at serving time the prefix is neither retrieved nor prefilled, so a
fact carries its context for free.

% Context-window encoding and chunked RoPE
% are therefore complementary halves of the same mechanism, the former fixes what
% each chunk \emph{means}, the latter fixes where it \emph{goes}, and together
% they let independently encoded turns be composed at request time without loss.

Together, chunked RoPE and context-window encoding decouple where a fact is
served from how it is encoded. The former guarantees exact position
relocation, while the latter preserves the contextual information needed for
accurate conditioning. This separation enables \sysname to retrieve, compose,
and inject facts independently at serving time without re-encoding.

\begin{theorem}[Context-Window Encoding]
\label{thm:context-window}
Let $P=(x_1,\ldots,x_w)$ denote a prefix consisting of the $w$ conversation
turns immediately preceding a target fact $T=(x_{w+1},\ldots,x_{w+c})$ (the $c$
tokens of the fact). Suppose the transformer is executed once on the
concatenated sequence $P \circ T$. Let $K_T,\;V_T$ denote the key and value
tensors corresponding only to the tokens of $T$.
Then:
\begin{enumerate}[leftmargin=*]
\item The stored $(K_T,V_T)$ are exactly the tensors that would have been
produced for $T$ during a standard prefill over $P\circ T$.

\item Every token in $T$ incorporates information from every token in $P$
through the transformer's causal attention.

\item No information from tokens occurring after $T$ contributes to
$(K_T,V_T)$.
\end{enumerate}

Consequently, context-window encoding stores the exact representation of the
target fact conditioned on the chosen encoding window.
\end{theorem}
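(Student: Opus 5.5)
The plan is to reduce all three claims to the two structural properties isolated in Definition~\ref{def:transformer}, causality and the fact that earlier positions enter only through their key/value pairs, and to reuse the layer-wise induction from the proof of Theorem~\ref{thm:equivalence}. Claim (1) is essentially definitional, and I would dispatch it first. Context-window encoding \emph{is} a standard deterministic forward pass of $\mathcal{T}$ over $P\circ T$ with the usual position assignment $0,\ldots,w+c-1$. Retaining the slice indexed by $w+1,\ldots,w+c$ is a projection that does not alter the tensors. If the stored keys are the pre-RoPE $W_K h$, I would note that the rotation is an invertible, position-determined map applied after the projection. So the stored pre-RoPE slice determines, and is determined by, the post-RoPE slice a standard prefill would write into the cache. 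The same observation shows that the stored slice still satisfies the hypotheses of Theorem~\ref{thm:chunked-rope}.

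For claim (3) I would do an induction on $\ell$, modeled on the ``memory positions'' case of Theorem~\ref{thm:equivalence}. Compare the forward pass on $P\circ T$ with one on $P\circ T\circ S$ for an arbitrary suffix $S$. The inductive hypothesis is that $\mathbf{h}_j^{(\ell-1)}$ agrees in the two runs for every $j\le w+c$. Under causal masking, $\mathbf{h}_j^{(\ell)}$ for $j\le w+c$ is computed by $\mathcal{U}^{(\ell)}$ from keys and values at positions $\le j$ only, so it agrees as well. The keys and values at layer $\ell$ are projections of the layer-$(\ell-1)$ states, together with a rotation that depends only on position. Hence $(K_T,V_T)$ is a function of $P\circ T$ alone, which is exactly the statement that no later token contributes.

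Claim (2) is the delicate step, because ``incorporates information'' must be made precise. Read literally, it cannot be a pointwise guarantee. An attention weight can be arbitrarily small but never exactly zero, and a degenerate parameter setting could make the output insensitive to $P$. I would therefore prove it in the structural form that Definition~\ref{def:transformer} supports. For every $\ell\ge 1$ and every $i\in\{w+1,\ldots,w+c\}$, the update $\mathcal{U}^{(\ell)}$ at position $i$ takes all pairs $(\mathbf{k}_j^{(\ell)},\mathbf{v}_j^{(\ell)})$ with $j\le w$ among its arguments. Softmax then assigns each of these a strictly positive weight, since every term $\exp(\cdot)$ is positive. Consequently $(K_T,V_T)$ at layers $\ell\ge 2$ lies in the image of a map whose domain includes every token of $P$. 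In general this map is not constant in $P$, although that last point would be remarked on rather than proved.

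I would close by combining (1) and (3). The stored slice is precisely the prefill representation of $T$ conditioned on the window $P$ and on nothing else, which is the ``consequently'' sentence. I would add a brief remark on what is not claimed: the slice is not equal to what a joint prefill over the retrieved facts would produce. That gap is the approximation discussed after Theorem~\ref{thm:equivalence}.
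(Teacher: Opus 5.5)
Your proposal is correct and follows essentially the same argument as the paper. The paper derives (1) from the fact that keys and values are deterministic projections of the hidden states of the forward pass over $P\circ T$, and derives (2)--(3) directly from causal masking. Your version adds refinements the paper's short proof glosses over: the explicit layer-wise induction against an arbitrary suffix $S$ for (3), the note that stored pre-RoPE keys match the cached post-RoPE keys only up to an invertible position-determined rotation, and the observation that layer-$1$ keys and values of $T$ do not depend on $P$ at all, so (2) applies to them only from $\ell\ge 2$.
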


\begin{proof}
Consider the standard autoregressive forward pass over the sequence $P\circ T$.
Under causal attention, the hidden state of every token depends only on earlier
tokens in the sequence.
Since every token of $P$ precedes every token of $T$, the hidden states
computed for tokens in $T$ incorporate all information contained in the prefix
$P$ exactly as they would during ordinary inference.

Because keys and values are deterministic projections of these hidden states,
the tensors $(K_T,V_T)$ extracted from the forward pass are exactly those that
would appear in the KV cache after prefilling $P\circ T$.
Finally, causal masking prevents every token occurring after the end of $T$
from influencing the hidden states of tokens within $T$. Therefore the computed
$(K_T,V_T)$ depend only on $P$ and $T$, establishing (1)--(3).
\end{proof}

\begin{corollary}[Exactness as the Window Grows]
\label{cor:window}
Suppose the encoding window contains every retrieved fact that precedes the
target fact in the serving prompt. Then the stored KV for the target fact is
identical to the KV that would be produced by a joint prompt injection over the
retrieved memory.
\end{corollary}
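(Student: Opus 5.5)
The approach is to reduce Corollary~\ref{cor:window} to the memory-position half of the induction in Theorem~\ref{thm:equivalence}, combined with the relocation statement of Theorem~\ref{thm:chunked-rope}.

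First, fix the setup. Let the serving prompt's memory segment be the concatenation $F_1 \circ \cdots \circ F_K$ of retrieved facts, possibly preceded by the fixed instruction prefix. Let the target fact be $T = F_r$. Read literally, the hypothesis says the encoding window $P$ contains every retrieved fact preceding $T$. I will interpret this as the strong form needed for exactness: $P$ equals, token for token, the left context $F_1 \circ \cdots \circ F_{r-1}$, including the instruction prefix. This makes $P \circ T$ a prefix of the joint memory sequence. I will flag explicitly that if $P$ merely \emph{contains} these facts among other turns, or orders them differently, exactness fails. So part of the proof is to state this reading as the precise hypothesis.

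Second, apply Theorem~\ref{thm:context-window}(1) and (3). The stored $(K_T, V_T)$ are exactly the KV tensors of $T$ in a standard prefill over $P \circ T$, and nothing after $T$ influences them. Now compare with joint prompt injection over the full memory $M$, of which $P \circ T$ is a prefix. By causality (property (i) of Definition~\ref{def:transformer}), the hidden states at positions up to the end of $T$ in the prefill of $M$ depend only on $P \circ T$. This is exactly the memory-position argument in the proof of Theorem~\ref{thm:equivalence}: induct on the layer $\ell$, and at each layer use that positions $j \le |P|+c$ see only tokens in $P \circ T$. The pre-RoPE keys $W_K h_j$ and the values then coincide layer by layer.

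Third, handle positions. In the encoding pass $T$ may sit at offsets $p_0 = |P|$ onward, while in the serving prompt it sits at $v_0$. If the instruction prefix is included so that $p_0 = v_0$, the positions already agree. Otherwise I rely on the fact that stored keys are pre-RoPE. Here I expect the main obstacle: Theorem~\ref{thm:chunked-rope} only addresses re-rotation of keys given fixed hidden states. But if $P$ sits at positions different from those it occupies in the joint prompt, the hidden states of $T$ are computed under different absolute positions, and they depend only on relative distances \emph{within} $P \circ T$. I would argue that under RoPE every attention score inside $P \circ T$ depends only on relative offsets. So a uniform shift of the whole window leaves all hidden states, and hence all pre-RoPE keys and values, unchanged. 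This is the shift-equivalence part (1) of Theorem~\ref{thm:chunked-rope}, applied to both queries and keys at every layer. The uniform shift is the one ingredient I cannot simply cite. Re-rotating at $v_0$ then yields exactly the post-RoPE keys of joint prompt injection, which completes the identity.
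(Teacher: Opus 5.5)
Your argument is correct and is essentially the paper's own. The paper gives no separate proof and treats the corollary as immediate from Theorem~\ref{thm:context-window}(1) plus causal masking, which is your second step; your strengthening of ``contains'' to ``$P$ is token-for-token the serving left context, instruction prefix included'' is the hypothesis that argument actually needs, and you are right that the literal reading is false. Under that hypothesis $p_0=v_0$ holds automatically, so the uniform-shift branch of your third step is unnecessary; and if ``otherwise'' is meant to cover a window that omits the instruction prefix, it fails, because no position shift or re-rotation can restore attention to tokens the encoding pass never saw.
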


Context-window encoding therefore introduces a single approximation parameter:
the size of the encoding window. Increasing the window monotonically enlarges
the portion of the joint-prefill dependency graph captured by each fact. The
limiting case, where the window contains the entire retrieved prefix, exactly
reproduces prompt injection.

\para{Position policy}
% \label{sec:position-policy}
We adopt the simplest assignment consistent with the theorem. The retrieved
facts $C_1,\ldots,C_k$ are placed contiguously starting at virtual position
zero, so fact $C_i$ begins at $b_i = \sum_{j<i} |C_j|$ and the query follows
immediately after the memory at $b_q = m = \sum_i |C_i|$. (Facts are
variable-length, so the offsets are cumulative lengths rather than multiples of
a fixed chunk size.) This preserves each fact's internal relative positions
exactly (Theorem~\ref{thm:chunked-rope}(2)) while discarding the original
inter-fact gaps. Furthermore, we order the retrieved facts in reverse order of
proximity and position the most relevant fact adjacent to the query.

\section{Evaluation}
\label{sec:eval}

We now evaluate \sysname as a \emph{memory system for LLMs}. In our evaluation, we compare
\sysname against the state-of-the-art production memory system
Mem0~\cite{chhikara2025mem0}. 
% the standard serving mechanism, \emph{prompt injection}, used by production
% memory systems such as Mem0~\cite{chhikara2025mem0}. 
\sysname contributes two techniques over a Mem0-style memory pipeline:
\textbf{GPU-native retrieval}, which serves the vector index from GPU memory
instead of a CPU vector store, and \textbf{KV Injection (KV)}, which conditions
on retrieved facts by inserting their precomputed KV tensors directly into the cache
rather than re-prefilling them as prompt text. Mem0, by contrast, uses a CPU
vector store and \textbf{Prompt Injection (PI)}, serializing the retrieved facts
into the input prompt and recomputing their KV on every request. Both systems
extract and retrieve the same top-$k$ facts using Mem0's pipeline. Unless noted
otherwise, \textbf{Mem0} denotes its default production configuration (a Qdrant
CPU vector store with prompt injection), so that \sysname-vs-Mem0 comparisons
reflect the combined effect of both contributions; the vector-backend ablation
(\Cref{tab:vector-backend}) isolates the GPU-native-retrieval contribution on
its own.
Our evaluation answers four questions:

\begin{itemize}[leftmargin=*]
  \item \textbf{RQ1 (serving latency).} Is serving latency nearly independent of the
    amount of retrieved memory, and what is the GPU-memory cost of the KV store?
    % (\Cref{sec:eval:latency,sec:eval:footprint})
  \item \textbf{RQ2 (accuracy).} Does context-window encoding recover the
    accuracy lost to independent chunk encoding?
    % (\Cref{sec:eval:accuracy})
  \item \textbf{RQ3 (serving throughput).} How does serving throughput scale
    with concurrent users?
  \item \textbf{RQ4 (CPU offloading).} How does offloading the pre-computed KV
    embeddings to CPU impact the serving latency?
\end{itemize}

\begin{figure*}[t]
\centering
\begin{tikzpicture}
\begin{groupplot}[lataxis, group style={group size=3 by 1, horizontal sep=1.0cm}]
\nextgroupplot[title={Llama-3.1-8B}, ylabel={Latency (ms)}]
\addplot[isttft]  coordinates {(5,16.49)(10,16.50)(20,16.87)(50,17.52)};
\addplot[memttft] coordinates {(5,33.15)(10,39.03)(20,46.44)(50,68.33)};
\addplot[isq2ft]  coordinates {(5,51.94)(10,51.72)(20,55.52)(50,62.12)};
\addplot[memq2ft] coordinates {(5,186.15)(10,190.22)(20,206.79)(50,236.48)};
\nextgroupplot[title={Mistral-7B}]
\addplot[isttft]  coordinates {(5,15.42)(10,15.68)(20,16.07)(50,16.81)};
\addplot[memttft] coordinates {(5,39.15)(10,53.88)(20,63.08)(50,81.71)};
\addplot[isq2ft]  coordinates {(5,42.66)(10,43.77)(20,46.58)(50,53.18)};
\addplot[memq2ft] coordinates {(5,129.37)(10,147.41)(20,162.01)(50,198.07)};
\nextgroupplot[title={Qwen2.5-7B}]
\addplot[isttft]  coordinates {(5,16.90)(10,16.71)(20,17.09)(50,17.82)};
\addplot[memttft] coordinates {(5,34.64)(10,36.15)(20,43.87)(50,62.89)};
\addplot[isq2ft]  coordinates {(5,58.46)(10,56.66)(20,60.16)(50,67.37)};
\addplot[memq2ft] coordinates {(5,178.81)(10,182.25)(20,190.83)(50,224.76)};
\end{groupplot}
\end{tikzpicture}

\vspace{2pt}
\begin{tikzpicture}
\begin{axis}[hide axis, scale only axis, height=2mm, width=0.85\textwidth,
  xmin=0, xmax=1, ymin=0, ymax=1, legend columns=4,
  legend style={draw=none, at={(0.5,0.5)}, anchor=center, font=\footnotesize,
  /tikz/every even column/.append style={column sep=0.6em}}]
\addlegendimage{isttft}\addlegendentry{\sysname{} TTFT}
\addlegendimage{memttft}\addlegendentry{Mem0 TTFT}
\addlegendimage{isq2ft}\addlegendentry{\sysname{} query-to-first-token}
\addlegendimage{memq2ft}\addlegendentry{Mem0 query-to-first-token}
\end{axis}
\end{tikzpicture}
\caption{Serving latency on \locomo (log scale, ms) vs.\ retrieval budget $k$
across three models: vLLM engine TTFT (solid) and end-to-end query-to-first-token
(dashed) for \sysname and Mem0 (default configuration), with prefix caching
enabled. \sysname's latency is nearly flat in $k$ and independent of the
encoding window (curves for $w\in\{0,5,20,50\}$ overlap within $1$\,ms; a
representative $w{=}5$ is shown), whereas Mem0 grows with $k$.}
\label{fig:latency}
\end{figure*}

% ------------------------------------------------------------
\subsection{Experimental setup}
\label{sec:eval:setup}

\para{Hardware and software}
All experiments run on a single NVIDIA RTX PRO 6000 Blackwell Server
Edition with 96GB of GDDR7 memory and peak bandwidth of 1597GB/s. The GPU is
connected through PCI Express 5.0 x16 with peak read throughput of 64GB/s.
\sysname is implemented as a KV connector plugin in vLLM~v0.19.1. The KV cache
uses \texttt{bfloat16} precision with a 16-token page size. 
We enable vLLM's \emph{prefix caching} in all benchmarks so that prompt injection can
receive any available speed ups if input tokens are repeated.
% no cross-request prefix
% reuse masks the cost we study: prompt injection therefore pays the full prefill
% of its retrieved memory on every request, matching the behavior a deployed
% memory system exhibits when the retrieved subset and its ordering vary from
% query to query (we measure the prefix-caching best case separately in
% \Cref{sec:eval:position}).
%

\para{Models}
We evaluate three open-weight instruction-tuned models:
\textbf{Llama-3.1-8B-Instruct}~\cite{llama3} (our primary target),
\textbf{Mistral-7B-Instruct-v0.3}~\cite{mistral7b}, and
\textbf{Qwen2.5-7B-Instruct}~\cite{qwen25}. They span
different attention and positional-encoding configurations (e.g., grouped-query
head counts~\cite{ainslie2023gqa} and RoPE parameterizations), testing that \sysname's mechanism is
model-agnostic rather than tuned to one architecture. \sysname requires no
fine-tuning; each model is used as released. 
To evaluate the robustness of \sysname on higher-parameter models, we also
evaluate \textbf{Qwen3-14B}~\cite{qwen3} (\Cref{sec:eval:scale}).

\para{Dataset}
Our workload is
long-conversation question answering on \locomo~\cite{locomo}. We use the
ten-conversation subset, which after excluding the adversarial/unanswerable
category comprises 1{,}540 answerable QA pairs across four categories: 282
multi-hop, 321 temporal, 96 open-domain, and 841 single-hop. Following Mem0's
preprocessing pipeline~\cite{chhikara2025mem0}, we extract up to five salient
facts from each dialogue turn; each fact is stored as one memory chunk, and its
text is embedded for retrieval. All accuracy numbers are micro-averaged over
the 1{,}540 answerable questions. We precompute 1{,}536-dimensional
\texttt{text-embedding-3-small}~\cite{neelakantan2022embeddings} embeddings for both memory facts and questions
and cache them on disk, so the measured request path performs a cached vector
lookup rather than a live embedding call. The Jasper index~\cite{mccoy2026gpu}
uses inner-product distance with 64 graph neighbors and beam width 64.
All one-time preprocessing—fact extraction, embedding generation, per-fact KV
encoding, Jasper index construction, and vLLM startup, is treated as an offline
setup cost and excluded from all reported latency and throughput.
% the measured request path
% begins at query submission.

\para{Conditions}
We compare two end-to-end systems.
% labeled \sysname and Mem0 throughout our results. 
\textbf{Mem0}~\cite{chhikara2025mem0}, the production baseline, serializes the
top-$k$ retrieved facts into the prompt and re-prefills them on every request
(\emph{prompt injection}). Mem0 uses Qdrant~\cite{qdrant} as the vector index for storing
and retrieving memory facts. \textbf{\sysname} retrieves from a GPU-resident
index and injects the retrieved facts as pre-RoPE
KV composed with chunked RoPE (\emph{KV injection}). We evaluate \sysname in two
configurations that differ only in where its pre-RoPE KV store resides:
\emph{GPU-resident} (the default) and \emph{CPU-offloaded}, which keeps the store
in host DRAM and streams the retrieved KV to the GPU over PCIe to scale capacity
beyond GPU memory (\S\ref{sec:eval:footprint},~\S\ref{sec:eval:cpu-offload}); we
report both where the distinction matters. We sweep the retrieval
budget $k\in\{5,10,20,50\}$ and, for \sysname, the offline encoding window
$w\in\{0,5,20,50\}$ (measured in complete conversation turns preceding a fact's
source turn; $w{=}0$ encodes each fact in isolation). Note that increasing $w$
changes only the information captured during offline encoding and its one-time
preprocessing cost. It does not affect serving-time retrieval, KV injection, or
GPU memory consumption.
Both systems retrieve with Mem0's pipeline using the same embeddings and budget
$k$; the GPU and CPU indices return comparable facts (\Cref{tab:vector-backend}).
% only the
% injection path differs.

\para{Metrics}
We evaluate \sysname across three axes. \emph{Accuracy} is the fraction of
answerable questions judged correct by an independent LLM judge,
Gemma-2-9B-Instruct~\cite{gemma2024}, prompted zero-shot for a binary verdict
against the reference answer (a refusal on an answerable question counts as
incorrect).
% ; an OpenAI judge agrees with Gemma in ordering, at lower absolute
% scores. % TODO: restore "\Cref{app:app-results}" pointer once the appendix
% % (application-results.tex) is re-enabled in main.tex.
For latency, we report two quantities.
\textbf{TTFT} is the vLLM engine's time to first token. It \emph{excludes}
retrieval and KV composition and isolates what injection changes inside the engine. 
\textbf{Query-to-first-token} is the end-to-end latency and
additionally includes the cached question-vector lookup, ANN search, KV
composition and registration, and prompt assembly. 
% We use TTFT for the
% mechanism-level comparison and query-to-first-token where end-to-end cost
% matters (\Cref{sec:eval:latency}).
%
Finally, we report \textbf{throughput} as the number of queries served per
second (QPS) under concurrent load. Starting from a single user, we increase
the number of users from 10 to 100 (10, 25, 50, 75, 100) issuing \locomo
queries simultaneously and record the sustained QPS at each concurrency level. 
% scaling up until the additional KV state no
% longer fits in GPU memory. The peak QPS therefore reflects the maximum number of
% concurrent users a single device can sustain before its memory capacity is
% exhausted.
% \todo{Verify if we are reporting any other quantity.}
% Where retrieval quality is at issue we also report recall@$k$.

% ------------------------------------------------------------
\subsection{Serving latency}
\label{sec:eval:latency}

We report two latencies: engine \textbf{TTFT}, which isolates what injection
changes inside vLLM, and end-to-end \textbf{query-to-first-token}, which
additionally includes retrieval, KV composition, and prompt assembly.
\Cref{fig:latency} plots both for \sysname and Mem0.
Prefix caching is enabled, giving Mem0 its best case. The full per-window
breakdown is in \Cref{tab:latency-full} of \Cref{app:latency}. Note that \sysname's latency is nearly
flat in $k$ and independent of the encoding window (curves for
$w\in\{0,5,20,50\}$ overlap within $1$\,ms; a representative $w{=}5$ is shown).

\para{TTFT} \sysname's engine TTFT is nearly flat with increasing top-$k$
($\approx$16--18\,ms across $k$ on every model), because retrieved memory is
injected rather than prefilled. Mem0 re-prefills the retrieved facts, so its
TTFT grows with $k$, reaching $68$/$82$/$63$\,ms at $k{=}50$ on
Llama/Mistral/Qwen, $3.6$--$4.8\times$ \sysname's. TTFT is also independent of
the encoding window (\sysname's $w$ curves overlap within $1$\,ms), confirming
that $w$ is an offline-only cost.

\para{End-to-end} The same performance gap holds end-to-end, and the gap
widens. \sysname's query-to-first-token rises only slightly with $k$ (e.g.,
$52\to60$\,ms on Llama) because retrieval (Jasper) and injection both run on
the GPU. Mem0 additionally pays CPU-side vector search and host-to-device
transfer, so its query-to-first-token is $3.4$--$3.9\times$ higher and grows
faster, $236$/$198$/$225$\,ms at $k{=}50$ versus \sysname's
$\approx$$60$/$53$/$66$\,ms.

\para{Vector backend} The retrieval index determines chiefly \emph{where}
nearest-neighbor search runs. The two backends return comparable top-$k$ sets
with only minor differences in accuracy; the backend's large effect is on
latency.
\Cref{tab:vector-backend} runs Mem0 (prompt injection) with either the GPU-native
Jasper index or a conventional CPU store (Qdrant). The CPU index adds a large,
roughly fixed cost due to host-side ANN search plus a host-to-device transfer
of the retrieved text, inflating query-to-first-token by $100$--$150$\,ms
(e.g., $187$ vs.\ $48$\,ms at $k{=}5$ on Llama). This is precisely the overhead
the Mem0 baseline in~\Cref{fig:latency} pays; \sysname avoids it by keeping
retrieval on the same device as the KV store and the engine.

\begin{table}[t]
  \centering
  \small
  \caption{Retrieval backend: end-to-end query-to-first-token (ms) for Mem0
    with the GPU-native Jasper index vs.\ a conventional CPU store
  (Qdrant). The two backends return comparable results; the backend's large
  effect is on retrieval latency.}
  \label{tab:vector-backend}
  \begin{tabular}{@{}ll rrrr@{}}
    \toprule
    Model & Backend & $k{=}5$ & $10$ & $20$ & $50$ \\
    \midrule
    \multirow{2}{*}{Llama-3.1-8B} & Jasper (GPU) & 47.92 & 53.13 & 60.86 & 86.56 \\
      & Qdrant (CPU) & 186.15 & 190.22 & 206.79 & 236.48 \\
    \midrule
    \multirow{2}{*}{Mistral-7B} & Jasper (GPU) & 39.72 & 45.78 & 56.56 & 88.41 \\
      & Qdrant (CPU) & 129.37 & 147.41 & 162.01 & 198.07 \\
    \midrule
    \multirow{2}{*}{Qwen2.5-7B} & Jasper (GPU) & 51.30 & 54.87 & 63.36 & 85.24 \\
      & Qdrant (CPU) & 178.81 & 182.25 & 190.83 & 224.76 \\
    \bottomrule
  \end{tabular}
\end{table}

% ------------------------------------------------------------
\subsection{End-to-end memory QA accuracy}
\label{sec:eval:accuracy}
\begin{table}[t]
  \centering
  \small
  \caption{Judged \locomo accuracy (\%) vs.\ retrieval budget $k$ for \sysname
  (per encoding window $w$) and Mem0 (prompt injection), micro-averaged over the
  answerable questions. Per-category results are in
  \Cref{app:accuracy}.}
  \label{tab:accuracy}
  \begin{tabular}{@{}ll rrrr@{}}
    \toprule
    Model & Method & $k{=}5$ & $10$ & $20$ & $50$ \\
    \midrule
    \multirow{5}{*}{Llama-3.1-8B}
      & \sysname ($w{=}0$)  & 62.21 & 62.14 & 57.66 & 53.77 \\
      & \sysname ($w{=}5$)  & 60.00 & 59.81 & 59.87 & 56.82 \\
      & \sysname ($w{=}20$) & 60.13 & 62.08 & 61.62 & 59.22 \\
      & \sysname ($w{=}50$) & 60.06 & 61.36 & 62.66 & 60.26 \\
      & Mem0                & 56.95 & 59.29 & 61.49 & 63.25 \\
    \midrule
    \multirow{5}{*}{Mistral-7B}
      & \sysname ($w{=}0$)  & 54.03 & 55.13 & 52.84 & 37.22 \\
      & \sysname ($w{=}5$)  & 56.04 & 57.60 & 58.09 & 56.45 \\
      & \sysname ($w{=}20$) & 53.96 & 55.97 & 56.95 & 58.24 \\
      & \sysname ($w{=}50$) & 55.78 & 56.49 & 58.70 & 58.30 \\
      & Mem0                & 61.30 & 63.96 & 65.00 & 64.35 \\
    \midrule
    \multirow{5}{*}{Qwen2.5-7B}
      & \sysname ($w{=}0$)  & 59.74 & 56.75 & 54.22 & 50.45 \\
      & \sysname ($w{=}5$)  & 59.29 & 57.53 & 58.05 & 57.40 \\
      & \sysname ($w{=}20$) & 60.00 & 59.68 & 58.25 & 58.64 \\
      & \sysname ($w{=}50$) & 58.83 & 58.25 & 58.18 & 58.77 \\
      & Mem0                & 60.06 & 63.64 & 65.13 & 64.61 \\
    \bottomrule
  \end{tabular}
\end{table}

\Cref{tab:accuracy} reports overall judged accuracy for \sysname (per window
$w$) and Mem0 in its default configuration. Encoding each fact in isolation
($w{=}0$) trails Mem0 and \emph{degrades} as more facts are retrieved, from
$62.2\%$ to $53.8\%$ on Llama and, most steeply, $54.0\%$ to $37.2\%$ on
Mistral as $k$ grows from $5$ to $50$. This is because independently encoded
facts share no cross-fact attention and add noise. A context window reverses
this: with $w{\ge}20$, accuracy is flat-to-rising in $k$ and comes within a few
points of Mem0 (on Llama, $60.3\%$ vs.\ $63.3\%$ at $k{=}50$) while often
exceeding it at small $k$. Prompt injection (Mem0), which performs full
cross-fact attention over the retrieved set, remains the accuracy ceiling, most
visibly on Mistral and Qwen. Because the window is a purely offline cost, this accuracy is
bought with no serving-time penalty.

The per-category breakdown (\Cref{app:accuracy}) localizes this residual gap.
It concentrates in \emph{multi-hop} questions, which require reasoning across
several retrieved facts, exactly the cross-fact attention that a joint prefill
provides and independent encoding omits (\Cref{thm:chunked-rope}(3)), where
Mem0 leads by as much as $16$ points at $k{=}50$ (steepest on Mistral). On
\emph{open-ended} questions the ordering reverses on Llama and Mistral, where
\sysname exceeds Mem0 by $3$--$9$ points; single-hop and temporal accuracy are
closer, with Mem0 generally holding a slight edge. The gap to prompt injection is therefore not a general
loss but a targeted one, confined to the multi-hop reasoning that most depends
on cross-fact attention.

\subsection{Serving throughput}
\label{sec:eval:throughput}

\begin{figure*}[t]
\centering
\begin{tikzpicture}
\begin{groupplot}[tputaxis, group style={group size=3 by 1, horizontal sep=1.0cm}]
\nextgroupplot[title={Llama-3.1-8B}, ylabel={Throughput (QPS)}]
\addplot[gputput] coordinates {(10,24.77)(25,46.61)(50,73.56)(75,86.35)(100,100.14)};
\addplot[cputput] coordinates {(10,24.78)(25,46.64)(50,73.47)(75,87.19)(100,101.91)};
\addplot[memtput] coordinates {(10,12.72)(25,19.20)(50,23.02)(75,24.98)(100,27.32)};
\nextgroupplot[title={Mistral-7B}]
\addplot[gputput] coordinates {(10,25.80)(25,47.19)(50,75.50)(75,90.80)(100,104.37)};
\addplot[cputput] coordinates {(10,25.81)(25,47.14)(50,75.04)(75,91.47)(100,105.07)};
\addplot[memtput] coordinates {(10,11.64)(25,16.70)(50,19.75)(75,21.37)(100,23.11)};
\nextgroupplot[title={Qwen2.5-7B}]
\addplot[gputput] coordinates {(10,29.93)(25,54.56)(50,87.54)(75,113.03)(100,134.56)};
\addplot[cputput] coordinates {(10,29.90)(25,54.64)(50,88.91)(75,113.51)(100,134.56)};
\addplot[memtput] coordinates {(10,14.71)(25,22.12)(50,26.68)(75,29.53)(100,32.44)};
\end{groupplot}
\end{tikzpicture}

\vspace{2pt}
\begin{tikzpicture}
\begin{axis}[hide axis, scale only axis, height=2mm, width=0.7\textwidth,
  xmin=0, xmax=1, ymin=0, ymax=1, legend columns=3,
  legend style={draw=none, at={(0.5,0.5)}, anchor=center, font=\footnotesize,
  /tikz/every even column/.append style={column sep=0.6em}}]
\addlegendimage{gputput}\addlegendentry{\sysname{} (GPU KV)}
\addlegendimage{cputput}\addlegendentry{\sysname{} (CPU-offloaded KV)}
\addlegendimage{memtput}\addlegendentry{Mem0}
\end{axis}
\end{tikzpicture}
\caption{Sustained serving throughput on \locomo (queries/s) vs.\ the number of
concurrent users, for \sysname (GPU-resident and CPU-offloaded KV) and Mem0.
\sysname's throughput scales near-linearly with concurrency, whereas Mem0
saturates; the GPU and CPU-offloaded curves overlap.}
\label{fig:throughput}
\end{figure*}

\para{Setup} 
We measure sustained serving throughput under concurrent load on \locomo
dataset. $N$ users share one conversation's memory corpus, each user issues two
distinct queries sampled from that conversation's question set, and all $2N$
requests are submitted to the vLLM engine as a single batch. Decoding is greedy
and every request generates exactly 50 output tokens, so conditions differ only
in how retrieved memory reaches the model. We sweep $N\in\{10,25,50,75,100\}$
and report QPS as completed requests divided by the engine's batch generation
time. 
% As in~\Cref{sec:eval:setup}, retrieval, KV composition, and prompt assembly
% are timed separately and excluded; 
The engine is warmed with length-matched random-token batches and the prefix
cache is reset before measurement, so no measured prompt is cache-seeded. 
Both systems retrieve top-$50$ facts per query: Mem0 re-prefills them as prompt
text, while \sysname injects the pre-composed KV.

\para{Results} \Cref{fig:throughput} shows that \sysname's throughput scales
near-linearly with the number of concurrent users, while Mem0 saturates early.
At $100$ users, \sysname reaches $100$/$104$/$135$ QPS on Llama/Mistral/Qwen
versus Mem0's $27$/$23$/$32$ QPS, a $3.7$--$4.5\times$ speedup, and the gap
widens with concurrency (Mem0 gains only ${\sim}2\times$ from $10$ to $100$
users, whereas \sysname gains ${\sim}4\times$). The reason is that \sysname
injects precomputed KV tensors and prefills only each request's short query, so many
more requests fit in a batch, Mem0 re-prefills the retrieved facts on every
request and bottlenecks on prefill compute. Offloading the KV store to CPU
costs no throughput loss: the CPU and GPU curves are indistinguishable, because the
one-time PCIe transfer is overlapped and amortized across the batch's decode.
% Thus the capacity headroom of CPU offloading (\Cref{sec:eval:cpu-offload}) comes
% for free in throughput.

% ------------------------------------------------------------
\subsection{Robustness to model scale}
\label{sec:eval:scale}

\begin{table}[t]
  \centering
  \small
  \caption{Robustness to model scale: \textbf{Qwen3-14B} on \locomo at $k{=}50$,
  $w{=}50$. Judged accuracy (\%) and latency (ms) for \sysname (GPU-resident and
  CPU-offloaded KV) vs.\ Mem0 in its default configuration (Qdrant vector store
  with prompt injection). Full sweeps
  over $k$ and $w$ and the per-category breakdown are in \Cref{app:qwen3-14b}.}
  \label{tab:qwen3-14b}
  \begin{tabular}{@{}lrrr@{}}
    \toprule
    Method & Accuracy (\%) & TTFT (ms) & Q-to-first-tok.\ (ms) \\
    \midrule
    \sysname (GPU KV) & 69.42 & 28.06 & 88.02 \\
    \sysname (CPU KV) & 69.42 & 31.11 & 129.55 \\
    Mem0              & 79.09 & 140.52 & 257.76 \\
    \bottomrule
  \end{tabular}
\end{table}

To test whether \sysname's techniques hold beyond the 7--8B regime, we repeat
the evaluation on \textbf{Qwen3-14B}. \Cref{tab:qwen3-14b} reports the main
results for ($k{=}50$, $w{=}50$). The full sweeps over $k$ and $w$ and the
per-category breakdown are in~\Cref{app:qwen3-14b}.

The results follow the same pattern as with other models evaluated. The latency
advantage, \emph{grows} with model size. KV injection holds TTFT flat at
${\sim}28$\,ms while Mem0 climbs to $141$\,ms ($5.0\times$ lower), and lowers
end-to-end query-to-first-token by $2.9\times$. This is due to the fact that a
larger model's prefill is costlier for prompt injection to repeat on every
request. CPU offloading remains inexpensive, adding only ${\sim}3$\,ms to TTFT. 
Regarding accuracy, context-window encoding is essential in recovering accuracy
loss due to chunked RoPE encoding (lack of cross attention in memory facts).
Encoding facts in isolation ($w{=}0$) collapses accuracy as $k$ grows (to
$24.7\%$ at $k{=}50$), while a window restores it to $69.4\%$, higher in
absolute terms than on the 7--8B parameter models. 
% though the gap to Mem0 ($76.6\%$) widens to ${\sim}7$ points, as the larger
% model exploits full cross-fact attention more effectively. 
%
\sysname's mechanism is therefore scale to larger models, and its serving benefit
strengthens at scale.

% ------------------------------------------------------------
\subsection{Cost of context-window encoding}
\label{sec:eval:window-cost}

\begin{table}[t]
  \centering
  \footnotesize
  \setlength{\tabcolsep}{4pt}
  \caption{Average offline model KV-tensor precomputation time in seconds per conversation. Each value averages the complete $k \in \{5,10\}$ runs, with 10 LoCoMo conversations per run.}
  \label{tab:locomo-kv-precompute-time-by-window}
  \begin{tabular}{@{}lrrrr@{}}
    \toprule
    Model & $w=0$ & $w=5$ & $w=20$ & $w=50$ \\
    \midrule
    Llama 3.1 8B & 38.91 & 48.25 & 105.59 & 206.45 \\
    Mistral 7B v0.3 & 15.51 & 20.76 & 41.19 & 78.24 \\
    Qwen2.5 7B & 30.91 & 41.56 & 80.77 & 166.50 \\
    \bottomrule
  \end{tabular}
\end{table}

Context-window encoding improves accuracy (\Cref{sec:eval:accuracy}) at a
purely \emph{offline} cost. However, encoding each fact behind its $w$
preceding turns lengthens the one-time preprocessing forward pass.
\Cref{tab:locomo-kv-precompute-time-by-window} reports this cost per
conversation. It grows with the window, on Llama, from $39$\,s at $w{=}0$ to
$206$\,s at $w{=}50$ ($\sim$$5\times$), with the same trend on Mistral and
Qwen. This is because a larger prefix means more tokens per encoding pass. The
cost is paid once per conversation and amortized across all of that user's
requests. Crucially, $w$ does not affect the serving latency. It changes
neither the stored nor the injected token count, so TTFT is flat across $w$
(\Cref{fig:latency}) and the GPU footprint is unchanged
(\Cref{sec:eval:footprint}). The window is therefore a pure offline-compute
knob that trades one-time preprocessing time for accuracy, with no serving-time
or memory penalty.

% ------------------------------------------------------------
\subsection{GPU memory footprint}
\label{sec:eval:footprint}
\begin{table}[t]
  \centering
  \footnotesize
  \setlength{\tabcolsep}{4pt}
  \caption{Average per-conversation storage footprint in decimal MB. Jasper graph and fact-KV tensors are GPU metrics; the fact--ID map is a CPU metric.}
  \label{tab:locomo-kv-component-space}
  \begin{tabular}{@{}lrrr@{}}
    \toprule
    Model & Jasper GPU & Map CPU & Fact KVs GPU \\
    \midrule
    Llama 3.1 8B & 13.50 & 7.97 & 4796.67 \\
    Mistral 7B v0.3 & 13.50 & 3.02 & 2257.44 \\
    Qwen2.5 7B & 13.50 & 6.48 & 1780.10 \\
    \bottomrule
  \end{tabular}
\end{table}

\sysname keeps three structures resident per conversation
(\Cref{tab:locomo-kv-component-space}): the Jasper proximity
graph~\cite{mccoy2026gpu}, a CPU-side fact--ID map, and the pre-RoPE fact KV
store. The retrieval structures are negligible in size. The Jasper graph and
the map together cost under $~25$\,MB per conversation on every model. The
footprint is dominated by the KV store ($1.8$--$4.8$\,GB per conversation),
which is also the only term that scales with model architecture, growing with
the number of layers, KV heads, and head dimension: Qwen2.5-7B's aggressive
grouped-query attention makes it the cheapest ($1.78$\,GB) and Llama-3.1-8B the
most expensive ($4.80$\,GB). The KV store therefore bounds how many users can
stay resident at once. After model weights and vLLM's paged cache, a single
96\,GB GPU holds the memory of roughly ten (Llama) to several tens (Qwen) of
concurrent conversations, but because each fact is encoded once and reused
across all of a user's requests, this is a one-time, amortized cost rather than
a per-request one.

\subsection{Offloading KV embedding to CPU}
\label{sec:eval:cpu-offload}

\begin{table}[t]
  \centering
  \small
  \caption{Latency impact of offloading the pre-RoPE KV store to pinned host
  DRAM (streamed to the GPU over PCIe) vs.\ keeping it GPU-resident, at $w{=}5$
  and $k{=}50$ (the largest budget). All values in ms, averaged over 1{,}540
  queries. Offloading adds ${\sim}1.5$--$3$\,ms to engine TTFT but a larger,
  PCIe-bound ${\sim}33$--$38$\,ms to the end-to-end latencies at this budget.}
  \label{tab:cpu-offload}
  \begin{tabular}{@{}ll rrr@{}}
    \toprule
    Model & KV store & TTFT & Q-to-first-tok. & Q-to-full-answer \\
    \midrule
    \multirow{2}{*}{Llama-3.1-8B} & GPU & 17.52 & 62.12 & 274.02 \\
      & CPU & 20.45 & 100.29 & 312.66 \\
    \midrule
    \multirow{2}{*}{Mistral-7B} & GPU & 17.02 & 53.10 & 583.65 \\
      & CPU & 18.93 & 86.42 & 618.32 \\
    \midrule
    \multirow{2}{*}{Qwen2.5-7B} & GPU & 17.82 & 67.37 & 482.43 \\
      & CPU & 19.29 & 103.42 & 519.98 \\
    \bottomrule
  \end{tabular}
\end{table}

The per-conversation KV embedding is the biggest component that resides in GPU
memory and can become a bottleneck for supporting larger memory context. At
$1.8$--$4.8$\,GB it limits how many users' memory fits in GDDR. However, to
overcome this bottleneck we can  keep the pre-RoPE KV in pinned host DRAM and
stream only the retrieved facts' tensors to the GPU over PCIe on each request.
\Cref{tab:cpu-offload} measures what this costs at serving time.

Offloading carries a minor cost due to PCIe data transfer latency. Engine TTFT rises by only
${\sim}1.5$--$3$\,ms (e.g., $17.5\to20.5$\,ms on Llama at $k{=}50$), since the
injection step itself is unchanged. The end-to-end latencies, however, absorb
the host-to-device transfer of the retrieved facts' KV over PCIe:
query-to-first-token grows by ${\sim}33$--$38$\,ms at $k{=}50$ (e.g.,
$62\to100$\,ms on Llama), and query-to-full-answer by a similar amount. This
overhead scales with the retrieval budget, from ${\sim}8$\,ms at $k{=}5$ to
${\sim}38$\,ms at $k{=}50$ on Llama, because more retrieved KV must be streamed.

Even so, CPU-offloaded query-to-first-token ($\approx$$100$\,ms on Llama at
$k{=}50$) is $2.3\times$ faster than Mem0's ($236$\,ms). Offloading therefore
lifts the GDDR capacity bound at a per-request cost of tens of milliseconds that
scales with $k$ (full sweep in \Cref{app:cpu-offload}).

% ------------------------------------------------------------
\subsection{Empirical equivalence}
\label{sec:eval:equivalence}

\begin{table}[h]
\centering
\small
\caption{Empirical-equivalence summary: outcome counts over the six
(user $\times$ query-type) cases per model; factual accuracy is $100\%$ in all
cases. Full per-case breakdown in \Cref{tab:equivalence} of \Cref{app:equivalence}.}
\label{tab:equivalence-summary}
\begin{tabular}{@{}lccc@{}}
\toprule
Model & Identical & Sem.\ equiv. & Different \\
\midrule
Llama-3.1-8B & 3 & 3 & 0 \\
Mistral-7B   & 2 & 4 & 0 \\
Qwen2.5-7B   & 4 & 1 & 1 \\
\midrule
Total        & 9 & 8 & 1 \\
\bottomrule
\end{tabular}
\end{table}

\Cref{thm:equivalence} guarantees identical outputs under exact arithmetic. We
test whether this survives bf16 paged attention, where GPU reductions are
non-deterministic. On each model we build three users ($89$--$143$ memory
tokens) and issue a factual-recall and an open-ended query each ($18$ cases).
Both paths consume the \emph{identical} prompt tokens and differ only in whether
the memory prefix's KV is supplied by \sysname's connector or recomputed by
prefill, with greedy decoding. \Cref{tab:equivalence-summary} summarizes the
outcome: most cases are token-for-token identical or semantically equivalent,
one differs, and factual accuracy is $100\%$ throughout. This is
exactly what bf16 predicts: the two paths reach the same prefix KV through
different reduction orders, so a near-tie between logits can flip the greedy
argmax and cascade without changing factual grounding, confirming
\Cref{thm:equivalence} in practice.

% ============================================================
% Intentionally omitted per the KV-vs-PI scope (previously stale):
%   - full-context baseline, standalone Mem0 comparison (PI already == Mem0),
%   - multi-user throughput, CPU-offload / scale-out.
% Re-introduce only if re-run under the new framing.
% ============================================================

% \input{related}   % folded into the last paragraph of the intro (overlap with intro/§2)
% ============================================================
\section{Conclusion}
\label{sec:conclusion}
% ============================================================

We presented \sysname, a GPU-native memory system that replaces the repeated
prefill of retrieved memory with reusable KV state: it precomputes each fact's
KV once and injects it into vLLM's paged cache, made
% exact and 
relocatable by chunked RoPE and made accurate by context-window encoding, with
no fine-tuning, and vLLM engine changes.
% or retrieval-pipeline changes.
We observe that \sysname can extend to any workload that repeatedly conditions
on a largely static retrieved context such as RAG over a stable corpus, cached
system prompts, tool descriptions, and we expect attention-layer injection to
become a default primitive for conditioning LLMs on static context.

Across three open-weight models on \locomo, \sysname keeps engine TTFT nearly
constant as the retrieval budget grows---$3.6$--$4.8\times$ lower than Mem0 at
$k{=}50$---and raises throughput by $3.7$--$4.5\times$ at $100$ concurrent
users, while context-window encoding recovers accuracy to within a few points of
prompt injection ($60.3\%$ vs.\ $63.3\%$). Offloading the KV store to host DRAM
further lifts the GPU-memory capacity bound at only a few-millisecond TTFT cost.

% \para{Future work}
\sysname opens several directions. First, its GPU-resident KV store is the
capacity bottleneck; scaling to larger contexts and more users calls for
\emph{eviction and paging} policies that keep hot memory on the GPU, spill cold
memory to host DRAM or disk, and prefetch along the retrieval path. Second, our
current store is built once and treated as read-only; supporting \emph{online
memory updates}, incrementally inserting, revising, and deleting facts, with
consistent updates to both the Jasper index and the pre-RoPE KV store, would
let \sysname track memory that evolves as conversations continue. Third,
extending \sysname to \emph{multi-node, multi-GPU} deployments raises questions
of sharding the KV store and retrieval index across devices and routing requests
to where a user's memory resides, so that capacity and throughput scale beyond a
single accelerator.

% We presented a GPU-native memory system that injects pre-computed KV cache
% tensors directly into vLLM's paged attention system, eliminating the $O(m^2)$
% prefill overhead of prompt injection. Our system achieves 4.39$\times$ TTFT
% improvement at 32K memory tokens and up to 7.59$\times$ throughput improvement
% under concurrent multi-user load, with semantically equivalent output quality.
% Cross-model validation on Mistral-7B-Instruct-v0.3 and Qwen2.5-7B-Instruct
% confirms these gains are architecture-agnostic, and evaluation on the LOCOMO
% benchmark demonstrates that KV injection matches full-context QA quality while
% reducing latency by 5.56$\times$.
%
% The approach is general: any scenario where static context is repeatedly
% prepended to requests---user memories, system prompts, few-shot examples,
% tool descriptions---can benefit from KV injection. We believe attention-layer
% injection will become the standard approach for personalized LLM serving as
% per-user context sizes continue to grow.

\begin{acks}
This research is funded in part by NSF grant OAC 2339521 and 2517201.
\end{acks}

% ============================================================
% References
% ============================================================
\bibliographystyle{ACM-Reference-Format}
\balance % even out the two columns of the bibliography
\bibliography{references}

%%% -*-BibTeX-*-
%%% Do NOT edit. File created by BibTeX with style
%%% ACM-Reference-Format-Journals [18-Jan-2012].

\begin{thebibliography}{24}

%%% ====================================================================
%%% NOTE TO THE USER: you can override these defaults by providing
%%% customized versions of any of these macros before the \bibliography
%%% command.  Each of them MUST provide its own final punctuation,
%%% except for \shownote{} and \showURL{}.  The latter two
%%% do not use final punctuation, in order to avoid confusing it with
%%% the Web address.
%%%
%%% To suppress output of a particular field, define its macro to expand
%%% to an empty string, or better, \unskip, like this:
%%%
%%% \newcommand{\showURL}[1]{\unskip}   % LaTeX syntax
%%%
%%% \def \showURL #1{\unskip}           % plain TeX syntax
%%%
%%% ====================================================================

\ifx \showCODEN    \undefined \def \showCODEN     #1{\unskip}     \fi
\ifx \showISBNx    \undefined \def \showISBNx     #1{\unskip}     \fi
\ifx \showISBNxiii \undefined \def \showISBNxiii  #1{\unskip}     \fi
\ifx \showISSN     \undefined \def \showISSN      #1{\unskip}     \fi
\ifx \showLCCN     \undefined \def \showLCCN      #1{\unskip}     \fi
\ifx \shownote     \undefined \def \shownote      #1{#1}          \fi
\ifx \showarticletitle \undefined \def \showarticletitle #1{#1}   \fi
\ifx \showURL      \undefined \def \showURL       {\relax}        \fi
% The following commands are used for tagged output and should be
% invisible to TeX
\providecommand\bibfield[2]{#2}
\providecommand\bibinfo[2]{#2}
\providecommand\natexlab[1]{#1}
\providecommand\showeprint[2][]{arXiv:#2}

\bibitem[Ainslie et~al\mbox{.}(2023)]%
        {ainslie2023gqa}
\bibfield{author}{\bibinfo{person}{Joshua Ainslie}, \bibinfo{person}{James Lee-Thorp}, \bibinfo{person}{Michiel de Jong}, \bibinfo{person}{Yury Zemlyanskiy}, \bibinfo{person}{Federico Lebr{\'o}n}, {and} \bibinfo{person}{Sumit Sanghai}.} \bibinfo{year}{2023}\natexlab{}.
\newblock \showarticletitle{{GQA}: Training Generalized Multi-Query Transformer Models from Multi-Head Checkpoints}. In \bibinfo{booktitle}{\emph{Proceedings of the 2023 Conference on Empirical Methods in Natural Language Processing (EMNLP)}}.
\newblock


\bibitem[Chen et~al\mbox{.}(2026)]%
        {retroinfer2025}
\bibfield{author}{\bibinfo{person}{Yaoqi Chen}, \bibinfo{person}{Jinkai Zhang}, \bibinfo{person}{Baotong Lu}, \bibinfo{person}{Qianxi Zhang}, \bibinfo{person}{Chengruidong Zhang}, \bibinfo{person}{Jing Liu}, \bibinfo{person}{Jingjia Luo}, \bibinfo{person}{Di Liu}, \bibinfo{person}{Huiqiang Jiang}, \bibinfo{person}{Qi Chen}, \bibinfo{person}{Bailu Ding}, \bibinfo{person}{Xiao Yan}, \bibinfo{person}{Jiawei Jiang}, \bibinfo{person}{Chen Chen}, \bibinfo{person}{Mingxing Zhang}, \bibinfo{person}{Cheng Li}, \bibinfo{person}{Yuqing Yang}, \bibinfo{person}{Fan Yang}, {and} \bibinfo{person}{Mao Yang}.} \bibinfo{year}{2026}\natexlab{}.
\newblock \showarticletitle{{RetroInfer}: A Vector Storage Engine for Scalable Long-Context {LLM} Inference}.
\newblock \bibinfo{journal}{\emph{Proceedings of the VLDB Endowment}} \bibinfo{volume}{19}, \bibinfo{number}{5} (\bibinfo{year}{2026}), \bibinfo{pages}{1016--1031}.
\newblock
\href{https://doi.org/10.14778/3796195.3796212}{doi:\nolinkurl{10.14778/3796195.3796212}}


\bibitem[Chhikara et~al\mbox{.}(2025)]%
        {chhikara2025mem0}
\bibfield{author}{\bibinfo{person}{Prateek Chhikara}, \bibinfo{person}{Dev Khant}, \bibinfo{person}{Saket Aryan}, \bibinfo{person}{Taranjeet Singh}, {and} \bibinfo{person}{Deshraj Yadav}.} \bibinfo{year}{2025}\natexlab{}.
\newblock \bibinfo{title}{{Mem0}: Building Production-Ready {AI} Agents with Scalable Long-Term Memory}.
\newblock
\showeprint[arxiv]{2504.19413}
\href{https://doi.org/10.48550/arXiv.2504.19413}{doi:\nolinkurl{10.48550/arXiv.2504.19413}}


\bibitem[{Gemma Team}(2024)]%
        {gemma2024}
\bibfield{author}{\bibinfo{person}{{Gemma Team}}.} \bibinfo{year}{2024}\natexlab{}.
\newblock \bibinfo{title}{{Gemma 2}: Improving Open Language Models at a Practical Size}.
\newblock
\showeprint[arxiv]{2408.00118}
\href{https://doi.org/10.48550/arXiv.2408.00118}{doi:\nolinkurl{10.48550/arXiv.2408.00118}}


\bibitem[Johnson et~al\mbox{.}(2019)]%
        {johnson2019faiss}
\bibfield{author}{\bibinfo{person}{Jeff Johnson}, \bibinfo{person}{Matthijs Douze}, {and} \bibinfo{person}{Herv{\'e} J{\'e}gou}.} \bibinfo{year}{2019}\natexlab{}.
\newblock \showarticletitle{Billion-Scale Similarity Search with {GPUs}}.
\newblock \bibinfo{journal}{\emph{IEEE Transactions on Big Data}} (\bibinfo{year}{2019}).
\newblock


\bibitem[Kwon et~al\mbox{.}(2023)]%
        {kwon2023pagedattention}
\bibfield{author}{\bibinfo{person}{Woosuk Kwon}, \bibinfo{person}{Zhuohan Li}, \bibinfo{person}{Siyuan Zhuang}, \bibinfo{person}{Ying Sheng}, \bibinfo{person}{Lianmin Zheng}, \bibinfo{person}{Cody~Hao Yu}, \bibinfo{person}{Joseph~E. Gonzalez}, \bibinfo{person}{Hao Zhang}, {and} \bibinfo{person}{Ion Stoica}.} \bibinfo{year}{2023}\natexlab{}.
\newblock \showarticletitle{Efficient Memory Management for Large Language Model Serving with {PagedAttention}}. In \bibinfo{booktitle}{\emph{Proceedings of the 29th Symposium on Operating Systems Principles}} (Koblenz, Germany) \emph{(\bibinfo{series}{SOSP '23})}. \bibinfo{publisher}{Association for Computing Machinery}, \bibinfo{address}{New York, NY, USA}, \bibinfo{pages}{611--626}.
\newblock
\href{https://doi.org/10.1145/3600006.3613165}{doi:\nolinkurl{10.1145/3600006.3613165}}


\bibitem[Liu et~al\mbox{.}(2025a)]%
        {retrievalattention2024}
\bibfield{author}{\bibinfo{person}{Di Liu}, \bibinfo{person}{Meng Chen}, \bibinfo{person}{Baotong Lu}, \bibinfo{person}{Huiqiang Jiang}, \bibinfo{person}{Zhenhua Han}, \bibinfo{person}{Qianxi Zhang}, \bibinfo{person}{Qi Chen}, \bibinfo{person}{Chengruidong Zhang}, \bibinfo{person}{Bailu Ding}, \bibinfo{person}{Kai Zhang}, \bibinfo{person}{Chen Chen}, \bibinfo{person}{Fan Yang}, \bibinfo{person}{Yuqing Yang}, {and} \bibinfo{person}{Lili Qiu}.} \bibinfo{year}{2025}\natexlab{a}.
\newblock \showarticletitle{{RetrievalAttention}: Accelerating Long-Context {LLM} Inference via Vector Retrieval}. In \bibinfo{booktitle}{\emph{Advances in Neural Information Processing Systems}}, Vol.~\bibinfo{volume}{38}. \bibinfo{publisher}{Curran Associates, Inc.}, \bibinfo{address}{Red Hook, NY, USA}, \bibinfo{numpages}{28}~pages.
\newblock
\urldef\tempurl%
\url{https://proceedings.neurips.cc/paper_files/paper/2025/hash/4e36d4049fb0fea195a8267c8dcd0824-Abstract-Conference.html}
\showURL{%
\tempurl}


\bibitem[Liu et~al\mbox{.}(2025b)]%
        {lmcache}
\bibfield{author}{\bibinfo{person}{Yuhan Liu}, \bibinfo{person}{Yihua Cheng}, \bibinfo{person}{Jiayi Yao}, \bibinfo{person}{Yuwei An}, \bibinfo{person}{Xiaokun Chen}, \bibinfo{person}{Shaoting Feng}, \bibinfo{person}{Yuyang Huang}, \bibinfo{person}{Samuel Shen}, \bibinfo{person}{Rui Zhang}, \bibinfo{person}{Kuntai Du}, {and} \bibinfo{person}{Junchen Jiang}.} \bibinfo{year}{2025}\natexlab{b}.
\newblock \bibinfo{title}{{LMCache}: An Efficient {KV} Cache Layer for Enterprise-Scale {LLM} Inference}.
\newblock
\showeprint[arxiv]{2510.09665}
\href{https://doi.org/10.48550/arXiv.2510.09665}{doi:\nolinkurl{10.48550/arXiv.2510.09665}}


\bibitem[Ma et~al\mbox{.}(2025)]%
        {blockattention}
\bibfield{author}{\bibinfo{person}{Dongyang Ma}, \bibinfo{person}{Yan Wang}, {and} \bibinfo{person}{Tian Lan}.} \bibinfo{year}{2025}\natexlab{}.
\newblock \showarticletitle{{Block-Attention} for Efficient Prefilling}. In \bibinfo{booktitle}{\emph{The Thirteenth International Conference on Learning Representations}}. \bibinfo{publisher}{OpenReview.net}, \bibinfo{address}{Singapore}, \bibinfo{numpages}{15}~pages.
\newblock
\urldef\tempurl%
\url{https://proceedings.iclr.cc/paper_files/paper/2025/hash/a03037317560b8c5f2fb4b6466d4c439-Abstract-Conference.html}
\showURL{%
\tempurl}


\bibitem[Maharana et~al\mbox{.}(2024)]%
        {locomo}
\bibfield{author}{\bibinfo{person}{Adyasha Maharana}, \bibinfo{person}{Dong-Ho Lee}, \bibinfo{person}{Sergey Tulyakov}, \bibinfo{person}{Mohit Bansal}, \bibinfo{person}{Francesco Barbieri}, {and} \bibinfo{person}{Yuwei Fang}.} \bibinfo{year}{2024}\natexlab{}.
\newblock \showarticletitle{Evaluating Very Long-Term Conversational Memory of {LLM} Agents}. In \bibinfo{booktitle}{\emph{Proceedings of the 62nd Annual Meeting of the Association for Computational Linguistics (Volume 1: Long Papers)}}. \bibinfo{publisher}{Association for Computational Linguistics}, \bibinfo{address}{Bangkok, Thailand}, \bibinfo{pages}{13851--13870}.
\newblock
\href{https://doi.org/10.18653/v1/2024.acl-long.747}{doi:\nolinkurl{10.18653/v1/2024.acl-long.747}}


\bibitem[Malkov and Yashunin(2020)]%
        {malkov2020hnsw}
\bibfield{author}{\bibinfo{person}{Yu~A. Malkov} {and} \bibinfo{person}{Dmitry~A. Yashunin}.} \bibinfo{year}{2020}\natexlab{}.
\newblock \showarticletitle{Efficient and Robust Approximate Nearest Neighbor Search Using Hierarchical Navigable Small World Graphs}.
\newblock \bibinfo{journal}{\emph{IEEE Transactions on Pattern Analysis and Machine Intelligence}} \bibinfo{volume}{42}, \bibinfo{number}{4} (\bibinfo{year}{2020}), \bibinfo{pages}{824--836}.
\newblock


\bibitem[McCoy et~al\mbox{.}(2026)]%
        {mccoy2026gpu}
\bibfield{author}{\bibinfo{person}{Hunter McCoy}, \bibinfo{person}{Zikun Wang}, {and} \bibinfo{person}{Prashant Pandey}.} \bibinfo{year}{2026}\natexlab{}.
\newblock \bibinfo{title}{{GPU}-Accelerated {ANNS}: Quantized for Speed, Built for Change}.
\newblock
\showeprint[arxiv]{2601.07048}
\href{https://doi.org/10.48550/arXiv.2601.07048}{doi:\nolinkurl{10.48550/arXiv.2601.07048}}


\bibitem[{Mistral AI Team}(2024)]%
        {mistral7b}
\bibfield{author}{\bibinfo{person}{{Mistral AI Team}}.} \bibinfo{year}{2024}\natexlab{}.
\newblock \bibinfo{title}{{Mistral-7B-Instruct-v0.3} Model Card}.
\newblock \bibinfo{howpublished}{Hugging Face model repository}.
\newblock
\urldef\tempurl%
\url{https://huggingface.co/mistralai/Mistral-7B-Instruct-v0.3}
\showURL{%
\tempurl}
\newblock
\shownote{Accessed 17 July 2026}.


\bibitem[Ootomo et~al\mbox{.}(2023)]%
        {ootomo2023cagra}
\bibfield{author}{\bibinfo{person}{Hiroyuki Ootomo}, \bibinfo{person}{Akira Naruse}, \bibinfo{person}{Corey Nolet}, \bibinfo{person}{Ray Wang}, \bibinfo{person}{Tamas Feher}, {and} \bibinfo{person}{Yong Wang}.} \bibinfo{year}{2023}\natexlab{}.
\newblock \showarticletitle{{CAGRA}: Highly Parallel Graph Construction and Approximate Nearest Neighbor Search for {GPUs}}.
\newblock \bibinfo{journal}{\emph{arXiv preprint arXiv:2308.15136}} (\bibinfo{year}{2023}).
\newblock


\bibitem[{OpenAI}(2024)]%
        {neelakantan2022embeddings}
\bibfield{author}{\bibinfo{person}{{OpenAI}}.} \bibinfo{year}{2024}\natexlab{}.
\newblock \bibinfo{title}{New Embedding Models and {API} Updates}.
\newblock \bibinfo{howpublished}{OpenAI product announcement}.
\newblock
\urldef\tempurl%
\url{https://openai.com/index/new-embedding-models-and-api-updates/}
\showURL{%
\tempurl}
\newblock
\shownote{Published 25 January 2024; accessed 17 July 2026}.


\bibitem[Packer et~al\mbox{.}(2023)]%
        {letta}
\bibfield{author}{\bibinfo{person}{Charles Packer}, \bibinfo{person}{Sarah Wooders}, \bibinfo{person}{Kevin Lin}, \bibinfo{person}{Vivian Fang}, \bibinfo{person}{Shishir~G. Patil}, \bibinfo{person}{Ion Stoica}, {and} \bibinfo{person}{Joseph~E. Gonzalez}.} \bibinfo{year}{2023}\natexlab{}.
\newblock \bibinfo{title}{{MemGPT}: Towards {LLM}s as Operating Systems}.
\newblock
\showeprint[arxiv]{2310.08560}
\href{https://doi.org/10.48550/arXiv.2310.08560}{doi:\nolinkurl{10.48550/arXiv.2310.08560}}


\bibitem[{Qdrant Solutions GmbH}(2026)]%
        {qdrant}
\bibfield{author}{\bibinfo{person}{{Qdrant Solutions GmbH}}.} \bibinfo{year}{2026}\natexlab{}.
\newblock \bibinfo{title}{{Qdrant}: Vector Database and Vector Search Engine}.
\newblock \bibinfo{howpublished}{Software repository}.
\newblock
\urldef\tempurl%
\url{https://github.com/qdrant/qdrant}
\showURL{%
\tempurl}
\newblock
\shownote{Accessed 17 July 2026}.


\bibitem[{Qwen Team}(2024)]%
        {qwen25}
\bibfield{author}{\bibinfo{person}{{Qwen Team}}.} \bibinfo{year}{2024}\natexlab{}.
\newblock \bibinfo{title}{{Qwen2.5} Technical Report}.
\newblock
\showeprint[arxiv]{2412.15115}
\href{https://doi.org/10.48550/arXiv.2412.15115}{doi:\nolinkurl{10.48550/arXiv.2412.15115}}


\bibitem[Rasmussen et~al\mbox{.}(2025)]%
        {zep2025}
\bibfield{author}{\bibinfo{person}{Preston Rasmussen}, \bibinfo{person}{Pavlo Paliychuk}, \bibinfo{person}{Travis Beauvais}, \bibinfo{person}{Jack Ryan}, {and} \bibinfo{person}{Daniel Chalef}.} \bibinfo{year}{2025}\natexlab{}.
\newblock \bibinfo{title}{{Zep}: A Temporal Knowledge Graph Architecture for Agent Memory}.
\newblock
\showeprint[arxiv]{2501.13956}
\href{https://doi.org/10.48550/arXiv.2501.13956}{doi:\nolinkurl{10.48550/arXiv.2501.13956}}


\bibitem[Su et~al\mbox{.}(2024)]%
        {su2021roformer}
\bibfield{author}{\bibinfo{person}{Jianlin Su}, \bibinfo{person}{Murtadha Ahmed}, \bibinfo{person}{Yu Lu}, \bibinfo{person}{Shengfeng Pan}, \bibinfo{person}{Wen Bo}, {and} \bibinfo{person}{Yunfeng Liu}.} \bibinfo{year}{2024}\natexlab{}.
\newblock \showarticletitle{{RoFormer}: Enhanced Transformer with Rotary Position Embedding}.
\newblock \bibinfo{journal}{\emph{Neurocomputing}}  \bibinfo{volume}{568} (\bibinfo{year}{2024}), \bibinfo{pages}{127063}.
\newblock
\href{https://doi.org/10.1016/j.neucom.2023.127063}{doi:\nolinkurl{10.1016/j.neucom.2023.127063}}


\bibitem[Team and Meta(2024)]%
        {llama3}
\bibfield{author}{\bibinfo{person}{Llama Team} {and} \bibinfo{person}{AI~@ Meta}.} \bibinfo{year}{2024}\natexlab{}.
\newblock \bibinfo{title}{The {Llama 3} Herd of Models}.
\newblock
\showeprint[arxiv]{2407.21783}
\href{https://doi.org/10.48550/arXiv.2407.21783}{doi:\nolinkurl{10.48550/arXiv.2407.21783}}


\bibitem[Team(2025)]%
        {qwen3}
\bibfield{author}{\bibinfo{person}{Qwen Team}.} \bibinfo{year}{2025}\natexlab{}.
\newblock \bibinfo{title}{{Qwen3} Technical Report}.
\newblock
\showeprint[arxiv]{2505.09388}
\href{https://doi.org/10.48550/arXiv.2505.09388}{doi:\nolinkurl{10.48550/arXiv.2505.09388}}


\bibitem[Xia et~al\mbox{.}(2026)]%
        {lazyattention}
\bibfield{author}{\bibinfo{person}{Haocheng Xia}, \bibinfo{person}{Mihir Pamnani}, \bibinfo{person}{Hanxi Fang}, \bibinfo{person}{Supawit Chockchowwat}, {and} \bibinfo{person}{Yongjoo Park}.} \bibinfo{year}{2026}\natexlab{}.
\newblock \bibinfo{title}{{LazyAttention}: Efficient Retrieval-Augmented Generation with Deferred Positional Encoding}.
\newblock
\showeprint[arxiv]{2606.04302}
\href{https://doi.org/10.48550/arXiv.2606.04302}{doi:\nolinkurl{10.48550/arXiv.2606.04302}}


\bibitem[Yao et~al\mbox{.}(2025)]%
        {yao2024cacheblend}
\bibfield{author}{\bibinfo{person}{Jiayi Yao}, \bibinfo{person}{Hanchen Li}, \bibinfo{person}{Yuhan Liu}, \bibinfo{person}{Siddhant Ray}, \bibinfo{person}{Yihua Cheng}, \bibinfo{person}{Qizheng Zhang}, \bibinfo{person}{Kuntai Du}, \bibinfo{person}{Shan Lu}, {and} \bibinfo{person}{Junchen Jiang}.} \bibinfo{year}{2025}\natexlab{}.
\newblock \showarticletitle{{CacheBlend}: Fast Large Language Model Serving for {RAG} with Cached Knowledge Fusion}. In \bibinfo{booktitle}{\emph{Proceedings of the Twentieth European Conference on Computer Systems}} (Rotterdam, Netherlands) \emph{(\bibinfo{series}{EuroSys '25})}. \bibinfo{publisher}{Association for Computing Machinery}, \bibinfo{address}{New York, NY, USA}, \bibinfo{pages}{94--109}.
\newblock
\href{https://doi.org/10.1145/3689031.3696098}{doi:\nolinkurl{10.1145/3689031.3696098}}


\end{thebibliography}

% ============================================================
% Appendix
% ============================================================
\clearpage
\appendix
% The \balance before the bibliography stays active until turned off; without
% this the table-heavy appendix's last page gets force-balanced and overflows
% the bottom margin. Keep references balanced, but not the appendix.
\nobalance
\crefalias{section}{appendix} % \Cref to appendix sections -> "Appendix A/B/..."
% \emergencystretch absorbs the one long text line in the appendix (a \Cref that
% can't be hyphenated) so it doesn't overflow the column edge.
\emergencystretch=1em
\begin{center}
  {\Large\bfseries Appendix}
\end{center}
\section{Full Serving-Latency Results}
\label{app:latency}

\Cref{fig:latency} plots serving latency for a representative encoding window;
\Cref{tab:latency-full} gives the full per-window breakdown---engine TTFT and
end-to-end query-to-first-token for \sysname at every window $w$ and for Mem0,
across all three models. \sysname's latency is essentially unchanged across $w$
(within $1$\,ms), confirming that the encoding window is an offline-only cost.

\begin{table*}[t]
  \centering
  \small
  \caption{Full serving latency on \locomo (ms, averaged over 1{,}540 queries):
  engine TTFT and end-to-end query-to-first-token for \sysname (per encoding
  window $w$) and Mem0, with prefix caching enabled.
  \Cref{fig:latency} plots a representative window ($w{=}5$).}
  \label{tab:latency-full}
  \begin{tabular}{@{}ll rrrr rrrr@{}}
    \toprule
    & & \multicolumn{4}{c}{TTFT (ms)} & \multicolumn{4}{c}{Query-to-first-token (ms)} \\
    \cmidrule(lr){3-6}\cmidrule(lr){7-10}
    Model & Method & $k{=}5$ & $10$ & $20$ & $50$ & $k{=}5$ & $10$ & $20$ & $50$ \\
    \midrule
    \multirow{5}{*}{Llama-3.1-8B}
      & \sysname ($w{=}0$)  & 16.57 & 16.48 & 16.66 & 17.33 & 52.30 & 51.92 & 53.14 & 60.34 \\
      & \sysname ($w{=}5$)  & 16.49 & 16.50 & 16.87 & 17.52 & 51.94 & 51.72 & 55.52 & 62.12 \\
      & \sysname ($w{=}20$) & 16.19 & 16.46 & 17.06 & 17.28 & 50.71 & 51.50 & 56.29 & 59.78 \\
      & \sysname ($w{=}50$) & 16.33 & 16.42 & 16.69 & 17.26 & 51.19 & 51.74 & 53.82 & 59.76 \\
      & Mem0                & 33.15 & 39.03 & 46.44 & 68.33 & 186.15 & 190.22 & 206.79 & 236.48 \\
    \midrule
    \multirow{5}{*}{Mistral-7B}
      & \sysname ($w{=}0$)  & 15.73 & 15.80 & 15.96 & 17.07 & 44.83 & 44.72 & 45.85 & 54.77 \\
      & \sysname ($w{=}5$)  & 15.42 & 15.68 & 16.07 & 16.81 & 42.66 & 43.77 & 46.58 & 53.18 \\
      & \sysname ($w{=}20$) & 15.65 & 15.72 & 16.00 & 16.66 & 43.55 & 43.51 & 46.07 & 52.21 \\
      & \sysname ($w{=}50$) & 15.58 & 15.64 & 16.12 & 16.82 & 43.17 & 43.63 & 46.87 & 52.64 \\
      & Mem0                & 39.15 & 53.88 & 63.08 & 81.71 & 129.37 & 147.41 & 162.01 & 198.07 \\
    \midrule
    \multirow{5}{*}{Qwen2.5-7B}
      & \sysname ($w{=}0$)  & 16.61 & 16.73 & 17.22 & 17.61 & 55.81 & 56.92 & 59.84 & 65.69 \\
      & \sysname ($w{=}5$)  & 16.90 & 16.71 & 17.09 & 17.82 & 58.46 & 56.66 & 60.16 & 67.37 \\
      & \sysname ($w{=}20$) & 16.80 & 16.98 & 17.22 & 17.57 & 56.97 & 58.30 & 59.99 & 64.59 \\
      & \sysname ($w{=}50$) & 16.77 & 17.02 & 17.10 & 17.95 & 56.76 & 58.55 & 59.16 & 66.92 \\
      & Mem0                & 34.64 & 36.15 & 43.87 & 62.89 & 178.81 & 182.25 & 190.83 & 224.76 \\
    \bottomrule
  \end{tabular}
\end{table*}

\section{Per-Category LoCoMo Accuracy}
\label{app:accuracy}

\Cref{tab:accuracy} reports overall judged accuracy;
\Cref{tab:accuracy-percat-llama,tab:accuracy-percat-mistral,tab:accuracy-percat-qwen}
break it down by \locomo question category for all three models, comparing
\sysname (per encoding window $w$) against Mem0. The
per-category trends match the overall result: the $w{=}0$ configuration degrades
as $k$ grows, a larger window recovers most of the loss, and Mem0 is the ceiling
at large $k$.

\begin{table}[H]
  \centering
  \small
  \caption{Per-category judged \locomo accuracy (\%) vs.\ retrieval budget $k$
  for \textbf{Llama-3.1-8B}: \sysname (per encoding window $w$) vs.\ Mem0
  (prompt injection).}
  \label{tab:accuracy-percat-llama}
  \begin{tabular}{@{}ll rrrr@{}}
    \toprule
    Category & Method & $k{=}5$ & $10$ & $20$ & $50$ \\
    \midrule
    \multirow{5}{*}{Single-hop}
      & \sysname ($w{=}0$)  & 72.29 & 70.75 & 67.78 & 64.21 \\
      & \sysname ($w{=}5$)  & 70.51 & 70.63 & 71.46 & 67.42 \\
      & \sysname ($w{=}20$) & 72.18 & 73.84 & 73.37 & 71.22 \\
      & \sysname ($w{=}50$) & 72.06 & 72.89 & 75.27 & 72.41 \\
      & Mem0                & 68.97 & 72.06 & 75.51 & 78.60 \\
    \midrule
    \multirow{5}{*}{Multi-hop}
      & \sysname ($w{=}0$)  & 56.74 & 59.57 & 52.84 & 55.67 \\
      & \sysname ($w{=}5$)  & 54.96 & 52.13 & 57.45 & 54.96 \\
      & \sysname ($w{=}20$) & 54.96 & 56.74 & 60.64 & 54.61 \\
      & \sysname ($w{=}50$) & 57.45 & 57.80 & 60.28 & 57.80 \\
      & Mem0                & 51.42 & 56.74 & 57.45 & 58.16 \\
    \midrule
    \multirow{5}{*}{Open-ended}
      & \sysname ($w{=}0$)  & 63.54 & 60.42 & 57.29 & 48.96 \\
      & \sysname ($w{=}5$)  & 62.50 & 61.46 & 61.46 & 57.29 \\
      & \sysname ($w{=}20$) & 58.33 & 63.54 & 62.50 & 63.54 \\
      & \sysname ($w{=}50$) & 54.17 & 63.54 & 61.46 & 60.42 \\
      & Mem0                & 52.08 & 51.04 & 54.17 & 54.17 \\
    \midrule
    \multirow{5}{*}{Temporal}
      & \sysname ($w{=}0$)  & 40.19 & 42.37 & 35.51 & 26.17 \\
      & \sysname ($w{=}5$)  & 36.14 & 37.69 & 31.15 & 30.53 \\
      & \sysname ($w{=}20$) & 33.64 & 35.51 & 31.46 & 30.53 \\
      & \sysname ($w{=}50$) & 32.71 & 33.64 & 32.09 & 30.53 \\
      & Mem0                & 31.78 & 30.53 & 30.53 & 30.22 \\
    \bottomrule
  \end{tabular}
\end{table}

\begin{table}[H]
  \centering
  \small
  \caption{Per-category judged \locomo accuracy (\%) vs.\ retrieval budget $k$
  for \textbf{Mistral-7B}: \sysname (per encoding window $w$) vs.\ Mem0
  (prompt injection).}
  \label{tab:accuracy-percat-mistral}
  \begin{tabular}{@{}ll rrrr@{}}
    \toprule
    Category & Method & $k{=}5$ & $10$ & $20$ & $50$ \\
    \midrule
    \multirow{5}{*}{Single-hop}
      & \sysname ($w{=}0$)  & 65.24 & 62.54 & 60.22 & 44.46 \\
      & \sysname ($w{=}5$)  & 64.45 & 67.54 & 67.38 & 68.93 \\
      & \sysname ($w{=}20$) & 62.90 & 66.71 & 66.59 & 70.04 \\
      & \sysname ($w{=}50$) & 65.16 & 66.47 & 68.13 & 70.51 \\
      & Mem0                & 72.41 & 74.08 & 75.74 & 74.20 \\
    \midrule
    \multirow{5}{*}{Multi-hop}
      & \sysname ($w{=}0$)  & 42.35 & 57.80 & 52.67 & 36.88 \\
      & \sysname ($w{=}5$)  & 50.00 & 51.42 & 54.96 & 49.46 \\
      & \sysname ($w{=}20$) & 47.52 & 47.16 & 52.48 & 51.61 \\
      & \sysname ($w{=}50$) & 51.06 & 48.94 & 53.90 & 52.31 \\
      & Mem0                & 56.03 & 64.54 & 65.60 & 68.09 \\
    \midrule
    \multirow{5}{*}{Open-ended}
      & \sysname ($w{=}0$)  & 56.25 & 54.17 & 56.25 & 41.05 \\
      & \sysname ($w{=}5$)  & 54.17 & 58.33 & 54.17 & 54.17 \\
      & \sysname ($w{=}20$) & 57.29 & 58.33 & 53.12 & 60.42 \\
      & \sysname ($w{=}50$) & 55.21 & 57.29 & 57.29 & 60.42 \\
      & Mem0                & 57.29 & 57.29 & 55.21 & 57.29 \\
    \midrule
    \multirow{5}{*}{Temporal}
      & \sysname ($w{=}0$)  & 34.27 & 33.64 & 32.60 & 17.45 \\
      & \sysname ($w{=}5$)  & 39.88 & 36.76 & 37.69 & 30.53 \\
      & \sysname ($w{=}20$) & 35.20 & 34.89 & 36.76 & 32.29 \\
      & \sysname ($w{=}50$) & 35.51 & 36.76 & 38.63 & 30.72 \\
      & Mem0                & 38.01 & 38.94 & 39.25 & 37.38 \\
    \bottomrule
  \end{tabular}
\end{table}

\begin{table}[H]
  \centering
  \small
  \caption{Per-category judged \locomo accuracy (\%) vs.\ retrieval budget $k$
  for \textbf{Qwen2.5-7B}: \sysname (per encoding window $w$) vs.\ Mem0
  (prompt injection).}
  \label{tab:accuracy-percat-qwen}
  \begin{tabular}{@{}ll rrrr@{}}
    \toprule
    Category & Method & $k{=}5$ & $10$ & $20$ & $50$ \\
    \midrule
    \multirow{5}{*}{Single-hop}
      & \sysname ($w{=}0$)  & 70.63 & 70.63 & 67.30 & 61.83 \\
      & \sysname ($w{=}5$)  & 70.87 & 70.27 & 70.63 & 70.51 \\
      & \sysname ($w{=}20$) & 71.94 & 73.13 & 72.06 & 73.84 \\
      & \sysname ($w{=}50$) & 72.65 & 71.46 & 71.58 & 73.72 \\
      & Mem0                & 73.72 & 77.53 & 79.79 & 80.14 \\
    \midrule
    \multirow{5}{*}{Multi-hop}
      & \sysname ($w{=}0$)  & 49.65 & 45.39 & 43.62 & 43.26 \\
      & \sysname ($w{=}5$)  & 53.90 & 48.58 & 53.55 & 52.13 \\
      & \sysname ($w{=}20$) & 52.84 & 52.48 & 51.06 & 51.06 \\
      & \sysname ($w{=}50$) & 48.94 & 51.42 & 52.48 & 51.42 \\
      & Mem0                & 53.19 & 57.80 & 60.28 & 59.93 \\
    \midrule
    \multirow{5}{*}{Open-ended}
      & \sysname ($w{=}0$)  & 64.58 & 58.33 & 58.33 & 56.25 \\
      & \sysname ($w{=}5$)  & 59.38 & 59.38 & 62.50 & 63.54 \\
      & \sysname ($w{=}20$) & 60.42 & 57.29 & 57.29 & 57.29 \\
      & \sysname ($w{=}50$) & 60.42 & 59.38 & 58.33 & 58.33 \\
      & Mem0                & 59.38 & 62.50 & 67.71 & 62.50 \\
    \midrule
    \multirow{5}{*}{Temporal}
      & \sysname ($w{=}0$)  & 38.63 & 29.91 & 28.04 & 25.23 \\
      & \sysname ($w{=}5$)  & 33.64 & 31.46 & 27.73 & 25.86 \\
      & \sysname ($w{=}20$) & 34.89 & 31.46 & 28.66 & 25.86 \\
      & \sysname ($w{=}50$) & 30.84 & 29.28 & 28.04 & 26.17 \\
      & Mem0                & 30.53 & 32.71 & 30.22 & 28.66 \\
    \bottomrule
  \end{tabular}
\end{table}

\section{Full CPU-Offloading Latency}
\label{app:cpu-offload}

\Cref{tab:cpu-offload} summarizes the effect of holding the pre-RoPE KV store in
host DRAM at a representative operating point;
\Cref{tab:cpu-offload-ttft,tab:cpu-offload-qtft,tab:cpu-offload-qta} give the
full sweep across the retrieval budget $k$ and the encoding window $w$, for
GPU-resident (\sysname's default) and CPU-resident KV, on all three models and
all three latency metrics. The two tracks are close on engine TTFT (within a few
ms), but CPU offload adds a larger, $k$-dependent overhead to the end-to-end
latencies---tens of ms at $k{=}50$---as the retrieved KV is streamed over PCIe.

\begin{table}[H]
  \centering
  \footnotesize
  \setlength{\tabcolsep}{4pt}
  \caption{Engine TTFT (ms), GPU-resident vs.\ CPU-resident KV store, per
  encoding window $w$ and retrieval budget $k$.}
  \label{tab:cpu-offload-ttft}
  \begin{tabular}{@{}ll rrrr@{}}
    \toprule
    Model & KV store & $k{=}5$ & $10$ & $20$ & $50$ \\
    \midrule
    \multirow{8}{*}{Llama-3.1-8B}
      & GPU ($w{=}0$)  & 16.57 & 16.48 & 16.66 & 17.33 \\
      & GPU ($w{=}5$)  & 16.49 & 16.50 & 16.87 & 17.52 \\
      & GPU ($w{=}20$) & 16.19 & 16.46 & 17.06 & 17.28 \\
      & GPU ($w{=}50$) & 16.33 & 16.42 & 16.69 & 17.26 \\
      & CPU ($w{=}0$)  & 16.49 & 16.97 & 18.26 & 20.72 \\
      & CPU ($w{=}5$)  & 17.46 & 17.15 & 18.06 & 20.45 \\
      & CPU ($w{=}20$) & 17.03 & 17.03 & 17.83 & 20.40 \\
      & CPU ($w{=}50$) & 16.86 & 17.35 & 18.22 & 20.48 \\
    \midrule
    \multirow{8}{*}{Mistral-7B}
      & GPU ($w{=}0$)  & 15.73 & 16.03 & 16.32 & 17.03 \\
      & GPU ($w{=}5$)  & 15.96 & 16.32 & 16.17 & 17.02 \\
      & GPU ($w{=}20$) & 15.83 & 15.94 & 16.23 & 16.92 \\
      & GPU ($w{=}50$) & 15.81 & 16.34 & 16.08 & 17.20 \\
      & CPU ($w{=}0$)  & 16.02 & 16.39 & 16.97 & 19.00 \\
      & CPU ($w{=}5$)  & 15.98 & 16.25 & 16.64 & 18.93 \\
      & CPU ($w{=}20$) & 16.11 & 16.43 & 16.53 & 18.88 \\
      & CPU ($w{=}50$) & 16.25 & 16.19 & 16.50 & 19.04 \\
    \midrule
    \multirow{8}{*}{Qwen2.5-7B}
      & GPU ($w{=}0$)  & 16.61 & 16.73 & 17.22 & 17.61 \\
      & GPU ($w{=}5$)  & 16.90 & 16.71 & 17.09 & 17.82 \\
      & GPU ($w{=}20$) & 16.80 & 16.98 & 17.22 & 17.57 \\
      & GPU ($w{=}50$) & 16.77 & 17.02 & 17.10 & 17.95 \\
      & CPU ($w{=}0$)  & 16.51 & 16.95 & 17.70 & 19.81 \\
      & CPU ($w{=}5$)  & 16.50 & 17.25 & 17.90 & 19.29 \\
      & CPU ($w{=}20$) & 16.35 & 17.14 & 18.00 & 19.78 \\
      & CPU ($w{=}50$) & 16.51 & 17.08 & 17.79 & 19.47 \\
    \bottomrule
  \end{tabular}
\end{table}

\begin{table}[H]
  \centering
  \footnotesize
  \setlength{\tabcolsep}{4pt}
  \caption{End-to-end query-to-first-token (ms), GPU-resident vs.\ CPU-resident
  KV store, per encoding window $w$ and retrieval budget $k$.}
  \label{tab:cpu-offload-qtft}
  \begin{tabular}{@{}ll rrrr@{}}
    \toprule
    Model & KV store & $k{=}5$ & $10$ & $20$ & $50$ \\
    \midrule
    \multirow{8}{*}{Llama-3.1-8B}
      & GPU ($w{=}0$)  & 52.30 & 51.92 & 53.14 & 60.34 \\
      & GPU ($w{=}5$)  & 51.94 & 51.72 & 55.52 & 62.12 \\
      & GPU ($w{=}20$) & 50.71 & 51.50 & 56.29 & 59.78 \\
      & GPU ($w{=}50$) & 51.19 & 51.74 & 53.82 & 59.76 \\
      & CPU ($w{=}0$)  & 54.49 & 58.15 & 70.28 & 101.70 \\
      & CPU ($w{=}5$)  & 60.16 & 59.81 & 70.58 & 100.29 \\
      & CPU ($w{=}20$) & 56.94 & 59.53 & 68.13 & 97.52 \\
      & CPU ($w{=}50$) & 56.42 & 61.38 & 70.50 & 99.00 \\
    \midrule
    \multirow{8}{*}{Mistral-7B}
      & GPU ($w{=}0$)  & 44.23 & 45.29 & 47.41 & 53.49 \\
      & GPU ($w{=}5$)  & 44.62 & 46.91 & 46.87 & 53.10 \\
      & GPU ($w{=}20$) & 43.96 & 44.83 & 46.77 & 53.18 \\
      & GPU ($w{=}50$) & 44.11 & 47.52 & 45.92 & 54.19 \\
      & CPU ($w{=}0$)  & 49.84 & 55.01 & 65.03 & 91.88 \\
      & CPU ($w{=}5$)  & 49.95 & 56.01 & 64.09 & 86.42 \\
      & CPU ($w{=}20$) & 48.35 & 58.14 & 65.56 & 89.32 \\
      & CPU ($w{=}50$) & 49.58 & 54.65 & 63.25 & 87.56 \\
    \midrule
    \multirow{8}{*}{Qwen2.5-7B}
      & GPU ($w{=}0$)  & 55.81 & 56.92 & 59.84 & 65.69 \\
      & GPU ($w{=}5$)  & 58.46 & 56.66 & 60.16 & 67.37 \\
      & GPU ($w{=}20$) & 56.97 & 58.30 & 59.99 & 64.59 \\
      & GPU ($w{=}50$) & 56.76 & 58.55 & 59.16 & 66.92 \\
      & CPU ($w{=}0$)  & 61.11 & 66.69 & 75.52 & 100.49 \\
      & CPU ($w{=}5$)  & 60.93 & 64.36 & 77.89 & 103.42 \\
      & CPU ($w{=}20$) & 63.98 & 64.04 & 72.19 & 98.02 \\
      & CPU ($w{=}50$) & 62.34 & 64.25 & 71.52 & 98.52 \\
    \bottomrule
  \end{tabular}
\end{table}

\begin{table}[H]
  \centering
  \footnotesize
  \setlength{\tabcolsep}{4pt}
  \caption{Query-to-full-answer (ms), GPU-resident vs.\ CPU-resident KV store,
  per encoding window $w$ and retrieval budget $k$.}
  \label{tab:cpu-offload-qta}
  \begin{tabular}{@{}ll rrrr@{}}
    \toprule
    Model & KV store & $k{=}5$ & $10$ & $20$ & $50$ \\
    \midrule
    \multirow{8}{*}{Llama-3.1-8B}
      & GPU ($w{=}0$)  & 250.94 & 225.20 & 207.18 & 201.23 \\
      & GPU ($w{=}5$)  & 289.15 & 279.15 & 264.81 & 274.02 \\
      & GPU ($w{=}20$) & 287.49 & 271.22 & 270.20 & 272.37 \\
      & GPU ($w{=}50$) & 284.91 & 271.34 & 262.61 & 269.38 \\
      & CPU ($w{=}0$)  & 253.07 & 231.44 & 224.38 & 242.59 \\
      & CPU ($w{=}5$)  & 297.65 & 287.13 & 280.02 & 312.66 \\
      & CPU ($w{=}20$) & 293.81 & 279.47 & 282.12 & 310.54 \\
      & CPU ($w{=}50$) & 290.21 & 279.84 & 279.30 & 308.77 \\
    \midrule
    \multirow{8}{*}{Mistral-7B}
      & GPU ($w{=}0$)  & 768.27 & 1808.69 & 1925.31 & 3026.33 \\
      & GPU ($w{=}5$)  & 262.98 & 281.87 & 338.52 & 583.65 \\
      & GPU ($w{=}20$) & 264.76 & 284.33 & 356.91 & 773.12 \\
      & GPU ($w{=}50$) & 257.31 & 271.65 & 342.00 & 902.23 \\
      & CPU ($w{=}0$)  & 775.13 & 1818.40 & 1942.44 & 3068.52 \\
      & CPU ($w{=}5$)  & 269.08 & 291.27 & 356.10 & 618.32 \\
      & CPU ($w{=}20$) & 269.12 & 297.82 & 375.54 & 809.00 \\
      & CPU ($w{=}50$) & 262.84 & 278.69 & 359.46 & 934.63 \\
    \midrule
    \multirow{8}{*}{Qwen2.5-7B}
      & GPU ($w{=}0$)  & 379.22 & 389.23 & 429.28 & 487.36 \\
      & GPU ($w{=}5$)  & 399.42 & 421.82 & 435.02 & 482.43 \\
      & GPU ($w{=}20$) & 392.52 & 405.92 & 420.61 & 444.02 \\
      & GPU ($w{=}50$) & 395.48 & 405.04 & 421.17 & 443.45 \\
      & CPU ($w{=}0$)  & 384.17 & 399.83 & 444.48 & 524.22 \\
      & CPU ($w{=}5$)  & 403.23 & 430.82 & 453.00 & 519.98 \\
      & CPU ($w{=}20$) & 398.48 & 411.86 & 433.56 & 479.74 \\
      & CPU ($w{=}50$) & 401.48 & 411.59 & 434.59 & 476.97 \\
    \bottomrule
  \end{tabular}
\end{table}

\section{Larger-Model (Qwen3-14B) Results}
\label{app:qwen3-14b}

\Cref{tab:qwen3-14b} reports a single headline operating point for Qwen3-14B;
here we give the full results. \Cref{tab:qwen3-14b-acc} sweeps overall accuracy
over the encoding window $w$ and retrieval budget $k$, \Cref{tab:qwen3-14b-percat}
breaks accuracy down by category, and
\Cref{tab:qwen3-14b-lat,tab:qwen3-14b-qta} give the full latency sweep for
GPU-resident and CPU-offloaded \sysname against Mem0. The trends match the
7--8B models: the $w{=}0$ configuration degrades sharply with $k$, a window
recovers it, Mem0 remains the accuracy ceiling, and \sysname holds a flat,
much lower latency.

\begin{table}[H]
  \centering
  \small
  \caption{Judged \locomo accuracy (\%) for Qwen3-14B vs.\ retrieval budget
  $k$: \sysname (per encoding window $w$) and Mem0 (prompt injection).}
  \label{tab:qwen3-14b-acc}
  \begin{tabular}{@{}l rrrr@{}}
    \toprule
    Method & $k{=}5$ & $10$ & $20$ & $50$ \\
    \midrule
    \sysname ($w{=}0$)  & 63.05 & 55.52 & 45.84 & 24.74 \\
    \sysname ($w{=}5$)  & 68.12 & 68.70 & 67.66 & 64.42 \\
    \sysname ($w{=}20$) & 68.25 & 69.61 & 69.87 & 68.90 \\
    \sysname ($w{=}50$) & 68.18 & 69.29 & 69.87 & 69.42 \\
    Mem0                & 74.94 & 77.79 & 78.44 & 79.09 \\
    \bottomrule
  \end{tabular}
\end{table}

\begin{table}[H]
  \centering
  \small
  \caption{Per-category judged \locomo accuracy (\%) for Qwen3-14B: \sysname
  (per encoding window $w$) vs.\ Mem0 (prompt injection).}
  \label{tab:qwen3-14b-percat}
  \begin{tabular}{@{}ll rrrr@{}}
    \toprule
    Category & Method & $k{=}5$ & $10$ & $20$ & $50$ \\
    \midrule
    \multirow{5}{*}{Single-hop}
      & \sysname ($w{=}0$)  & 70.27 & 64.45 & 54.82 & 29.73 \\
      & \sysname ($w{=}5$)  & 73.13 & 74.44 & 75.27 & 73.48 \\
      & \sysname ($w{=}20$) & 72.77 & 74.32 & 75.51 & 74.91 \\
      & \sysname ($w{=}50$) & 73.37 & 73.96 & 75.51 & 76.69 \\
      & Mem0                & 79.79 & 81.93 & 83.35 & 83.83 \\
    \midrule
    \multirow{5}{*}{Multi-hop}
      & \sysname ($w{=}0$)  & 50.71 & 44.68 & 39.36 & 22.70 \\
      & \sysname ($w{=}5$)  & 53.55 & 56.03 & 54.61 & 54.61 \\
      & \sysname ($w{=}20$) & 55.32 & 58.87 & 59.57 & 63.12 \\
      & \sysname ($w{=}50$) & 54.26 & 57.45 & 61.70 & 64.18 \\
      & Mem0                & 64.54 & 68.09 & 69.50 & 73.05 \\
    \midrule
    \multirow{5}{*}{Open-ended}
      & \sysname ($w{=}0$)  & 67.71 & 52.08 & 44.79 & 22.92 \\
      & \sysname ($w{=}5$)  & 66.67 & 68.75 & 67.71 & 61.46 \\
      & \sysname ($w{=}20$) & 67.71 & 66.67 & 68.75 & 62.50 \\
      & \sysname ($w{=}50$) & 65.62 & 65.62 & 62.50 & 59.38 \\
      & Mem0                & 65.62 & 69.79 & 71.88 & 62.50 \\
    \midrule
    \multirow{5}{*}{Temporal}
      & \sysname ($w{=}0$)  & 53.58 & 42.68 & 28.35 & 14.02 \\
      & \sysname ($w{=}5$)  & 68.22 & 64.80 & 59.19 & 50.16 \\
      & \sysname ($w{=}20$) & 67.91 & 67.60 & 64.49 & 60.12 \\
      & \sysname ($w{=}50$) & 67.60 & 68.54 & 64.49 & 57.94 \\
      & Mem0                & 74.14 & 77.88 & 75.39 & 76.95 \\
    \bottomrule
  \end{tabular}
\end{table}

\begin{table*}[t]
  \centering
  \small
  \caption{Qwen3-14B latency (ms): engine TTFT and end-to-end
  query-to-first-token, GPU-resident vs.\ CPU-offloaded \sysname and Mem0
  (Qdrant).}
  \label{tab:qwen3-14b-lat}
  \begin{tabular}{@{}l rrrr rrrr@{}}
    \toprule
    & \multicolumn{4}{c}{TTFT (ms)} & \multicolumn{4}{c}{Query-to-first-token (ms)} \\
    \cmidrule(lr){2-5}\cmidrule(lr){6-9}
    Method & $k{=}5$ & $10$ & $20$ & $50$ & $k{=}5$ & $10$ & $20$ & $50$ \\
    \midrule
    \sysname GPU ($w{=}0$)  & 26.41 & 26.43 & 26.99 & 28.19 & 78.74 & 78.16 & 81.25 & 82.09 \\
    \sysname GPU ($w{=}5$)  & 26.22 & 26.42 & 27.01 & 28.06 & 76.71 & 78.59 & 80.29 & 87.72 \\
    \sysname GPU ($w{=}20$) & 26.61 & 26.40 & 27.04 & 28.00 & 78.08 & 77.42 & 80.85 & 86.95 \\
    \sysname GPU ($w{=}50$) & 26.39 & 26.55 & 26.87 & 28.06 & 77.11 & 78.48 & 79.88 & 88.02 \\
    \sysname CPU ($w{=}0$)  & 26.54 & 27.78 & 28.28 & 31.53 & 80.72 & 88.21 & 97.73 & 131.55 \\
    \sysname CPU ($w{=}5$)  & 26.75 & 27.61 & 28.31 & 31.33 & 82.34 & 87.53 & 97.17 & 131.34 \\
    \sysname CPU ($w{=}20$) & 26.97 & 27.44 & 28.21 & 31.26 & 82.43 & 87.45 & 97.06 & 131.24 \\
    \sysname CPU ($w{=}50$) & 26.89 & 27.49 & 28.18 & 31.11 & 84.06 & 87.05 & 97.63 & 129.55 \\
    Mem0 (Qdrant)           & 59.16 & 66.57 & 84.47 & 140.52 & 153.33 & 163.47 & 186.12 & 257.76 \\
    \bottomrule
  \end{tabular}
\end{table*}

\begin{table}[H]
  \centering
  \small
  \caption{Qwen3-14B query-to-full-answer latency (ms), GPU-resident vs.\
  CPU-offloaded \sysname and Mem0. The $w{=}0$ runs are anomalous at large $k$.}
  \label{tab:qwen3-14b-qta}
  \begin{tabular}{@{}l rrrr@{}}
    \toprule
    Method & $k{=}5$ & $10$ & $20$ & $50$ \\
    \midrule
    \sysname GPU ($w{=}0$)  & 720.64 & 865.20 & 1379.48 & 7587.51 \\
    \sysname GPU ($w{=}5$)  & 544.92 & 604.22 & 664.73 & 796.33 \\
    \sysname GPU ($w{=}20$) & 546.17 & 600.72 & 665.69 & 778.32 \\
    \sysname GPU ($w{=}50$) & 533.79 & 595.76 & 638.68 & 750.38 \\
    \sysname CPU ($w{=}0$)  & 722.46 & 875.15 & 1396.10 & 7639.13 \\
    \sysname CPU ($w{=}5$)  & 550.67 & 613.01 & 681.43 & 838.60 \\
    \sysname CPU ($w{=}20$) & 550.63 & 610.73 & 682.32 & 822.25 \\
    \sysname CPU ($w{=}50$) & 540.47 & 603.95 & 655.75 & 791.44 \\
    Mem0 (Qdrant)           & 596.70 & 658.14 & 718.65 & 830.76 \\
    \bottomrule
  \end{tabular}
\end{table}

% ============================================================
\section{Empirical equivalence: per-case results}
\label{app:equivalence}

\Cref{tab:equivalence} gives the per-user, per-query-type breakdown for the
empirical-equivalence test in \Cref{sec:eval:equivalence}.

\begin{table}[h]
\centering
\caption{Output comparison: KV injection vs.\ prompt injection with greedy
decoding (temperature~$=0$). Both runs use the IDENTICAL prompt token
sequence (encoded memory prefix + chat-templated user turn); the only
difference is whether the prefix's KV cache is supplied by
\texttt{MemoryKVConnector} or recomputed by standard prefill.
``Identical'' means token-for-token match;
``Sem.~equiv.'' means all facts are preserved with different phrasing.
The phrasing differences are consistent with floating-point
non-determinism in GPU reductions and do not indicate a violation of
Theorem~\ref{thm:equivalence}.}
\label{tab:equivalence}
\small
\resizebox{\columnwidth}{!}{%
\begin{tabular}{@{}lllcc@{}}
\toprule
\textbf{Model} & \textbf{User} & \textbf{Query Type} & \textbf{KV vs.\ PI} & \textbf{Factual Acc.} \\
\midrule
  \multirow{6}{*}{Llama-3.1-8B} & $u_1$ (143 tok) & Factual recall & Identical & 100\% \\
   &  & Open-ended & Sem.\ equiv. & 100\% \\
   & $u_2$ (111 tok) & Factual recall & Identical & 100\% \\
   &  & Open-ended & Sem.\ equiv. & 100\% \\
   & $u_3$ (111 tok) & Factual recall & Identical & 100\% \\
   &  & Open-ended & Sem.\ equiv. & 100\% \\
\midrule
  \multirow{6}{*}{Mistral-7B} & $u_1$ (131 tok) & Factual recall & Sem.\ equiv. & 100\% \\
   &  & Open-ended & Sem.\ equiv. & 100\% \\
   & $u_2$ (93 tok) & Factual recall & Identical & 100\% \\
   &  & Open-ended & Sem.\ equiv. & 100\% \\
   & $u_3$ (94 tok) & Factual recall & Sem.\ equiv. & 100\% \\
   &  & Open-ended & Identical & 100\% \\
\midrule
  \multirow{6}{*}{Qwen2.5-7B} & $u_1$ (121 tok) & Factual recall & Identical & 100\% \\
   &  & Open-ended & \textbf{Different} & 100\% \\
   & $u_2$ (89 tok) & Factual recall & Identical & 100\% \\
   &  & Open-ended & Sem.\ equiv. & 100\% \\
   & $u_3$ (89 tok) & Factual recall & Identical & 100\% \\
   &  & Open-ended & Identical & 100\% \\
\bottomrule
\end{tabular}%
}
\end{table}

% ------------------------------------------------------------
\section{Position independence and rotation fidelity}
\label{sec:eval:position}

\begin{table}[t]
\centering
\small
\caption{TTFT under chunk-order permutation. Memory consists of 64-token
chunks; the \emph{shuffled} configurations permute the chunks with a
deterministic per-trial seed. Prefix caching (row~2) yields a flat
near-decode latency when the chunk order repeats but collapses to
full-prefill cost under permutation (row~3). Chunked-RoPE composition
(row~4) achieves cache-hit latency \emph{without requiring the cache to
hit}. All values are mean TTFT in milliseconds over 3 trials; standard
deviations are below 5\,ms in every cell. \texttt{---} indicates
configurations not run for that model.}
\label{tab:shuffle}
\resizebox{\columnwidth}{!}{%
\begin{tabular}{llrrrr}
\toprule
Model & Configuration & 1K & 4K & 16K & 32K \\
\midrule
\multirow{4}{*}{Llama-3.1-8B-Instruct}
  & PI / no prefix cache / fixed     &  613 &  766 & 1652 & 3305 \\
  & PI / prefix cache / fixed        &  572 &  589 &  652 &  736 \\
  & PI / prefix cache / shuffled     &  614 &  767 & 1656 & 3311 \\
  & \textbf{Chunked-RoPE / shuffled} & \textbf{576} & \textbf{592} & \textbf{656} & \textbf{746} \\
\midrule
\multirow{4}{*}{Mistral-7B-Instruct}
  & PI / no prefix cache / fixed     &  586 &  737 & 1624 & --- \\
  & PI / prefix cache / fixed        &  544 &  561 &  623 & --- \\
  & PI / prefix cache / shuffled     &  583 &  739 & 1629 & --- \\
  & \textbf{Chunked-RoPE / shuffled} & \textbf{547} & \textbf{563} & \textbf{628} & --- \\
\midrule
\multirow{4}{*}{Qwen2.5-7B-Instruct}
  & PI / no prefix cache / fixed     &  593 &  726 & 1485 & --- \\
  & PI / prefix cache / fixed        &  556 &  564 &  597 & --- \\
  & PI / prefix cache / shuffled     &  593 &  727 & 1489 & --- \\
  & \textbf{Chunked-RoPE / shuffled} & \textbf{559} & \textbf{567} & \textbf{600} & --- \\
\bottomrule
\end{tabular}%
}
\end{table}

\Cref{thm:chunked-rope} states that a chunk stored pre-RoPE can be injected at
any virtual position with exactly the attention it would receive under prompt
injection. Two consequences are testable: the on-the-fly rotation must be
numerically lossless, and serving latency must depend on \emph{which} chunks are
active, not on the order in which they are composed. The rotation is lossless by
construction, and \Cref{sec:chunked-rope} verifies that re-rotating captured
pre-RoPE keys reproduces the model's post-RoPE keys to within bf16 noise; here
we test the operational consequence, order-independence.

\para{Setup} We tokenize a fixed corpus into 64-token chunks and, for memory
sizes $N\in\{1\text{K},4\text{K},16\text{K},32\text{K}\}$, use the first $N/64$
chunks. We measure engine TTFT under four configurations: prompt injection with
prefix caching \emph{disabled} (the unoptimized baseline); prompt injection with
prefix caching \emph{enabled} and a \emph{fixed} chunk order (the best case for
byte-prefix caching, which hits on every trial); the same but with chunks
\emph{permuted} per trial under a deterministic seed (defeating byte-prefix
reuse); and \sysname, which composes those \emph{same} permuted pre-RoPE chunks
with chunked RoPE and injects them. Each cell is the mean of 3 trials after 2
warm-ups, run in isolated subprocesses.

\para{Results} The pattern in \Cref{tab:shuffle} is uniform across models.
Prefix caching yields a flat, decode-dominated TTFT only while the chunk order
repeats; permuting the chunks collapses the cache and reverts prompt injection
to full-prefill cost that grows with $N$---on Llama at 32K, $3311$\,ms, $4.5\times$
the cache-hit floor and indistinguishable from running with no cache. \sysname,
composing the same permuted chunks on the fly, instead matches the cache-hit
floor to within about $1\%$ at every size and model: $4.44\times$ faster than
shuffled prompt injection at 32K on Llama, and $2.59\times$/$2.48\times$ at 16K
on Mistral/Qwen. On-the-fly rotation thus reaches the latency floor that
classical caching attains only when the prefix happens to repeat, confirming
that \sysname's cost is set by the active chunk set, not its arrangement. (The
one-time composition step is cached across queries that reuse the same memory;
we fold it into the end-to-end latency reported in \Cref{sec:eval:latency}.)

\end{document}